\renewcommand{\p@enumi}{theenumi-}
\renewcommand{\@fnsymbol}[1]{\@alph{#1}}
\newcommand{\bbr}{\mathbb{R}}  
\newcommand{\ci}{\citeasnoun}
\newcommand{\be}{\begin{equation}}
\newcommand{\ee}{\end{equation}}
\newcommand{\bew}{\begin{equation*}}
\newcommand{\eew}{\end{equation*}}
\newcommand{\var}{{\rm V@R}}
\newcommand{\avar}{{\rm AV@R}}
\newcommand{\reavar}{{\rm RecAV@R}}
\newcommand{\ba}{\begin{array}{ll}}
\newcommand{\bal}{\begin{array}{ll}}
\newcommand{\ea}{\end{array}}
\newcommand{\E}{\mathbb{E}}
\newcommand{\probp}{\mathbb{P}}
\newcommand{\R}{\mathbb{R}}
\newcommand{\N}{\mathbb{N}}
\newcommand{\cE}{{\mathcal{E}}}
\newcommand{\cF}{{\mathcal{F}}}
\newcommand{\cS}{{\mathcal{S}}}
\newcommand{\cD}{\mathcal{D}}
\newcommand{\cI}{{\mathcal{I}}}
\newcommand{\cL}{\mathcal{L}}
\newcommand{\cM}{\mathcal{M}}
\newcommand{\cP}{\mathcal{P}}
\newcommand{\cX}{{\mathcal{X}}}
\newcommand{\VaR}{\mathop {\rm V@R}\nolimits}
\newtheorem{theorem}{Theorem}
\newtheorem{thm}{Theorem}
\newtheorem{definition}[thm]{Definition}
\newtheorem{proposition}[thm]{Proposition}
\newtheorem{remark}[thm]{Remark}
\begin{document}

\title{
Robust Portfolio Selection\\ Under Recovery Average Value at Risk
}

\author{
Cosimo Munari\footnote{Center for Finance and Insurance and Swiss Finance Institute, University of Zurich, Plattenstrasse 14, 8032 Zurich, Switzerland.
e-mail:  {\tt cosimo.munari@bf.uzh.ch}.} \\[1.0ex] \textit{University of Zurich}
\and
Justin Pl\"{u}ckebaum\footnote{Leibniz University Hannover, Welfengarten 1, 30167 Hannover, Germany.
e-mail:  {\tt justin.plueckebaum@insurance.uni-hannover.de}.} \\[1.0ex] \textit{Leibniz University Hannover}
\and
Stefan Weber\footnote{House of Insurance \& Institute of Actuarial and Financial Mathematics, Leibniz University Hannover, Welfengarten 1, 30167 Hannover, Germany.
e-mail:  {\tt stefan.weber@insurance.uni-hannover.de}.} \\[1.0ex] \textit{Leibniz University Hannover}
}
\date{\today}

\maketitle

\begin{abstract}
We study mean-risk optimal portfolio problems where risk is measured by Recovery Average Value at Risk, a prominent example in the class of recovery risk measures. We establish existence results in the situation where the joint distribution of portfolio assets is known as well as in the situation where it is uncertain and only assumed to belong to a set of mixtures of benchmark distributions (mixture uncertainty) or to a cloud around a benchmark distribution (box uncertainty). The comparison with the classical Average Value at Risk shows that portfolio selection under its recovery version enables financial institutions to exert better control on the recovery on liabilities while still allowing for tractable computations.
\end{abstract}
\vspace{0.2cm}
\textbf{Keywords:} Robust portfolio management; risk measures; recovery average at risk; efficient frontier; mean-risk optimal portfolios.
	
	
	
	
	
	
	\onehalfspacing
	\pagenumbering{arabic}
	\setcounter{page}{1}
	
	\setlength{\parindent}{0em}
	
\section{Introduction}

Portfolio selection is one of the central topics in mathematical finance and has been extensively studied in the literature. Since the pioneering publications by \ci{Markowitz52}, \ci{sharpe1963simplified}, \ci{budgets1965valuation}, much attention has been devoted to optimal portfolio problems in a mean-risk framework, where the objective is to study portfolios of financial assets that maximize expected returns subject to a given risk control. As in every optimization problem, the key questions from a theoretical perspective are those about existence, uniqueness, stability, and explicit identification of optimal portfolios. This, of course, highly depends on the chosen risk measure as well as on the assumptions on the (joint) distribution of the various assets. At the beginning, the literature has almost exclusively used the variance of the aggregated portfolio as the underlying measure of risk. In more recent years, especially after the publication of \ci{ADEH99}, there has been growing interest in revisiting mean-risk portfolio problems replacing the variance with risk measures that were deemed to capture risk in a more appropriate form, e.g., by focusing on the tail distribution of aggregated portfolios only. The bulk of the literature has focused on Value at Risk (V@R) and Average Value at Risk (AV@R) and on their comparison; see, e.g., \ci{RU00}, \ci{basak2001value}, \ci{campbell2001optimal}, \ci{frey2002var}, \ci{RU02}, \ci{yiu2004optimal}, \ci{yamai2005value}, \ci{leippold2006equilibrium}, \ci{ciliberti2007feasibility}, \ci{doi:10.1287/opre.1070.0433}, \ci{pirvu09}. \ci{GUNDEL20081126} study risk constraints in terms of utility-based shortfall risk. While most of the initial literature worked under the basic assumption that the joint distribution of portfolio assets is known, the subsequent literature has expanded the scope of research to include situations where there is uncertainty about the joint dependence across assets. The corresponding robust optimal portfolio problems under dependence uncertainty have been studied, e.g., in \ci{GUNDEL20071663}, \ci{quaranta2008robust} and \ci{ZhuFu09}.

\smallskip

The goal of this note is to investigate optimal portfolio problems in a mean-risk framework where risk is measured by Recovery Average Value at Risk (RecAV@R). This is a prominent example of a recovery risk measure, a concept that has been recently introduced in \ci{MWW23}. As argued there, recovery risk measures are designed to complement standard risk measures used in solvency regulation by offering portfolio managers the ability to exert a tighter control on the recovery of liabilities. In this sense, recovery risk measures have natural applications to mean-risk portfolio problems in an asset-liability management setting, where the risk constraint plays, for example, the role of an external regulatory constraint that can be interpreted as a solvency capital requirement. In the case of AV@R, one can only ensure solvency on average in the worst, say, $1\%$ (as in the Swiss Solvency Test) or $2.5\%$ (as in Basel III) of scenarios, but this {\em per se} does not provide any information about the ability to cover any pre-specified fraction of liabilities. However, it clearly matters to liability holders, and regulators on their behalf, if, say, $95\%$ or only $5\%$ of liabilities is recovered in the case of insolvency. A recovery risk measure like RecAV@R can be employed to this effect. By definition, RecAV@R ensures that assets are sufficient to cover on average any pre-specified fraction $\lambda$ of liabilities in the worst $100\gamma(\lambda)\%$ of scenarios. The function $\gamma$ can be chosen to tailor the relevant size of the tail distribution depending on the size of liabilities to be recovered. In particular, it is reasonable to assume that $\gamma$ is increasing and $\gamma(1)$ coincides with a regulatory threshold like $0.01$ (as in the Swiss Solvency Test) or $0.025$ (as in Basel III) to make sure {\em a priori} that RecAV@R is more stringent than the AV@R used in insurance or banking regulation.

\smallskip

This note is organized as follows. In Section 2 we briefly review the definition and the main properties of RecAV@R. In Section 3 we focus on optimal portfolio problems under RecAV@R both without and with dependence uncertainty. The main contribution is to show, by means of suitable minimax theorems, that optimal portfolios can be determined by solving appropriate linear programming problems that are both conceptually and computationally akin to the problems studied by \ci{RU00}, \ci{RU02}, and \ci{ZhuFu09} in the setting of mean-$\avar$ portfolio selection. In Section 4 we apply our results to study optimal portfolios in two concrete case studies. The first case study shows that there can be a marked difference between optimal portfolios under $\avar$ and $\reavar$. More specifically, in the presence of a risk-free and a risky asset, there are realistic situations where it is optimal under $\avar$ to fully invest in the risky asset whereas the optimal holding in the risky asset is capped if $\reavar$ is used to measure risk. In the second case study we focus on the more computational aspects and show that, in the presence of two risky assets whose returns follow standard distributions encountered in applications, the determination of robust efficient frontiers under $\reavar$ is feasible and computationally similar to the one under $\avar$.


\section{Recovery Average Value at Risk}

In this section we recall the definition and the basic properties of the risk measure Recovery Average Value at Risk ($\reavar$) introduced in \ci{MWW23}, to which we refer for the relevant proofs and for additional details. In the next sections we will take up the study of mean-risk portfolio problems where risk is quantified by $\reavar$.

\smallskip

Let $(\Omega,\cF,\probp)$ be a probability space and denote by $L^0$ the vector space of Borel measurable functions $X:\Omega\to\R$ (modulo $\probp$-almost sure equality). Throughout the paper we assume that positive values of $X$ represent a profit or a positive balance whereas negative values of $X$ represent a loss or a negative balance. The {\em Value at Risk} ($\var$) of $X\in L^0$ at level $\alpha\in[0,1]$ is defined by
\[
\VaR_\alpha(X):=
\inf\{x\in\R \,; \ \probp(X+x<0)\leq\alpha\}.
\]
The {\em Average Value at Risk} ($\avar$) of $X\in L^0$ at level $\alpha\in[0,1]$ is defined by
\[
\avar_\alpha(X):=
\begin{cases}
\frac{1}{\alpha}\int_0^\alpha \VaR_\beta(X)d\beta, & \mbox{if} \ \alpha\in(0,1],\\
\inf\{x\in\R \,; \ \probp(X+x\ge0)=1\} & \mbox{if} \ \alpha=0.
\end{cases}
\]

\begin{definition}
\label{def:reavar}
Let $\gamma: [0,1] \to [0,1]$ be an increasing function. The \emph{Recovery Average Value at Risk} ($\reavar$) of $(X,Y)\in L^0\times L^0$ with level function $\gamma$ is defined by
\[
\reavar_\gamma (X,Y) := \sup_{\lambda \in [0,1]}\avar_{\gamma(\lambda)}(X+(1-\lambda)Y).
\]
\end{definition}

\smallskip

Clearly, $\reavar$ is an extension of $\avar$. Indeed, by taking a constant function $\gamma$, say $\gamma\equiv\alpha$ for some $\alpha\in[0,1]$, one easily verifies that for every $X$ and for every positive $Y$
\[
\reavar_\gamma (X,Y) = \avar_\alpha(X).
\]
The definition of $\reavar$ is motivated by the following application. Consider a financial firm with stylized balance sheet at a generic time $t$ given by

\smallskip

\begin{center}
\begin{tabular}{|c|c|}
\hline
\bf Assets & \bf Liabilities \\
\hline\hline
\multirow{2}{*}{$A_t$}&$L_t$\\
\cline{2-2}
 &$E_t=A_t-L_t$\\
\hline
\end{tabular}
\end{center}

\smallskip

The quantity $E_t$ represents the net asset value of the firm and can be either positive or negative depending on whether the asset value $A_t$ is larger than the liability value $L_t$ or not. In the typical setting of a one-year horizon there are two reference dates, $t=0$ (today) and $t=1$ (end of the year). In a risk-sensitive solvency framework, the firm is adequately capitalized if its {\em available capital} $E_0$ is larger than a suitable {\em solvency capital requirement} that depends on the size of $E_1$ and therefore captures the inherent risk in the evolution of the balance sheet. In practice, solvency capital requirements are determined by applying a suitable risk measure $\rho$ like $\VaR$ or $\avar$ to the variation\footnote{In practice, instead of $E_0$ the expectation of $E_1$, typically discounted, is frequently used, see \ci{HKW20} for a discussion.} in the net asset value $\Delta E_1:=E_1-E_0$. The corresponding {\em solvency test} therefore takes the form
\[
\rho(\Delta E_1)\le E_0.
\]
The risk measure $\reavar$ can be used to define a solvency test of this type. Indeed, if the random variables $X$ and $Y$ in Definition \ref{def:reavar} are interpreted, respectively, as the net asset value $E_1$ and liabilities $L_1$ in the firm's balance sheet, then we can design the solvency test
\begin{equation}
\label{reavar_solvency test}
\reavar_\gamma(\Delta E_1,L_1)\leq E_0.
\end{equation}
The financial interpretation is clear once we observe that \eqref{reavar_solvency test} is equivalent to requiring that
\[
\avar_{\gamma(\lambda)}(A_1-\lambda L_1)\leq0, \ \ \ \forall\,\lambda\in[0,1].
\]
In words, the firm is adequately capitalized with respect to $\reavar$ if, for every fraction $\lambda\in[0,1]$, a firm with assets $A_1$ and liabilities $\lambda L_1$ is solvent on average in the worst $100\gamma(\lambda)\%$ scenarios (under $\probp$). In particular, the firm must be solvent on average in the worst $100\gamma(1)\%$ scenarios (under $\probp$), showing that \eqref{reavar_solvency test} is more stringent than a standard $\avar$ test at level $\gamma(1)$. It therefore comes as no surprise that, in the special case where the level function $\gamma$ is constant, say $\gamma\equiv\alpha$ for some $\alpha\in[0,1]$, the test \eqref{reavar_solvency test} boils down to a standard $\avar$ test
\begin{equation}
\label{avar_solvency test}
\avar_\alpha(\Delta E_1)=\reavar_\gamma(\Delta E_1,L_1)\leq E_0 \ \iff \ \avar_\alpha(E_1)\le0.
\end{equation}
In this case the firm is adequately capitalized if it is solvent on average in the worst $100\alpha\%$ scenarios (under $\probp$). The flexibility added by \eqref{reavar_solvency test} to the standard test \eqref{avar_solvency test} is that one can control recovery on liabilities, which is not permitted by standard solvency capital requirements based on $\avar$. This control is made possible by prescribing, in principle for each recovery level $\lambda\in[0,1]$, a different tail threshold $\gamma(\lambda)$. In this sense, it is natural to assume, as in Definition \ref{def:reavar}, that $\gamma$ is an increasing function: When we target a higher recovery on liabilities, we require solvency over a larger portion of the tail of $\Delta E_1$. It should be noted that \eqref{reavar_solvency test} also allows to control the probability of recovering the pre-specified fractions of liabilities. This is because, for any given $\alpha$ level, $\avar$ is larger than $\VaR$ and therefore
\[
\reavar_\gamma(\Delta E_1,L_1)\leq E_0 \ \implies \ \VaR_{\gamma(\lambda)}(A_1-\lambda L_1)\leq0, \ \ \ \forall\,\lambda\in[0,1].
\]
Translated in terms of recovery probabilities, we obtain as claimed
\[
\reavar_\gamma(\Delta E_1,L_1)\leq E_0 \ \implies \ \probp(A_1<\lambda L_1)\le\gamma(\lambda), \ \ \ \forall\,\lambda\in[0,1].
\]
The next proposition records an equivalent formulation of $\reavar$ when the level function $\gamma$ is piecewise constant. In this case, $\reavar$ is especially tractable and it is precisely this type of level functions that will be later used in our numerical studies.

\begin{proposition}
\label{prop: parametric gamma avar}
For $n\in\N_0$ let $0\leq\alpha_1<\cdots<\alpha_{n+1}\le1$ and $0<r_1<\cdots<r_n<r_{n+1}=1$. Define a function $\gamma:[0,1]\to[0,1]$ by
\[
\gamma(\lambda)=
\begin{cases}
\alpha_1 & \mbox{if} \ 0\leq\lambda<r_1,\\
\alpha_2 & \mbox{if} \ r_1\leq\lambda<r_2,\\
 \ \vdots\\
\alpha_n & \mbox{if} \ r_{n-1}\leq\lambda<r_n,\\
\alpha_{n+1} & \mbox{if} \ r_n\leq\lambda\leq r_{n+1}=1.\\
\end{cases}
\]
For all $X\in L^0$ and $Y\in L^0_+$
\[
\reavar_\gamma(X,Y) = \max_{i=1,\dots,n+1}\avar_{\alpha_i}(X+(1-r_i)Y).
\]
\end{proposition}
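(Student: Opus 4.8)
The plan is to exploit that the level function $\gamma$ takes only finitely many values, being constant on each of the intervals $I_1=[0,r_1),\ I_2=[r_1,r_2),\ \dots,\ I_n=[r_{n-1},r_n),\ I_{n+1}=[r_n,1]$ (with the convention $r_0:=0$), so that the supremum defining $\reavar_\gamma(X,Y)$ splits into a maximum of suprema over these intervals. On $I_i$ one has $\gamma\equiv\alpha_i$, and the quantity to optimize reduces to the single-parameter map $g_i(\lambda):=\avar_{\alpha_i}(X+(1-\lambda)Y)$. First I would show that each $g_i$ is non-decreasing on $[0,1]$: for $\lambda\le\lambda'$ the hypothesis $Y\in L^0_+$ gives $(1-\lambda)Y\ge(1-\lambda')Y$, hence $X+(1-\lambda)Y\ge X+(1-\lambda')Y$, and the monotonicity of $\avar_{\alpha_i}$ (inherited from that of $\VaR$ under the sign convention that profits are positive) yields $g_i(\lambda)\le g_i(\lambda')$.

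Given monotonicity, the supremum of $g_i$ over the half-open interval $I_i=[r_{i-1},r_i)$ for $i\le n$ equals the left limit $\lim_{\lambda\uparrow r_i}g_i(\lambda)$, while over the closed interval $I_{n+1}=[r_n,1]$ it is attained at $\lambda=1$, giving $g_{n+1}(1)=\avar_{\alpha_{n+1}}(X)=\avar_{\alpha_{n+1}}(X+(1-r_{n+1})Y)$ since $r_{n+1}=1$. The remaining step, which I expect to be the only delicate point, is to identify the left limit with the endpoint value, i.e. $\lim_{\lambda\uparrow r_i}g_i(\lambda)=g_i(r_i)=\avar_{\alpha_i}(X+(1-r_i)Y)$ for $i\le n$. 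This is precisely continuity of the convex map $c\mapsto\avar_{\alpha_i}(X+cY)$ at $c=1-r_i$; recall $0<r_1<\cdots<r_n<r_{n+1}=1$, so each relevant $r_i$ lies strictly inside $(0,1)$ and $c=1-r_i$ is an interior point. Where $g_i$ is finite this is automatic, since a finite convex function on $\R$ is continuous at interior points; in the extended-valued case it follows from the lower semicontinuity (Fatou property) of $\avar_{\alpha_i}$ together with the monotonicity of $g_i$ already established, which force $\sup_{\lambda<r_i}g_i(\lambda)=g_i(r_i)$.

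Combining the pieces gives
\[
\reavar_\gamma(X,Y)=\sup_{\lambda\in[0,1]}\avar_{\gamma(\lambda)}(X+(1-\lambda)Y)=\max_{i=1,\dots,n+1}\sup_{\lambda\in I_i}g_i(\lambda)=\max_{i=1,\dots,n+1}\avar_{\alpha_i}(X+(1-r_i)Y),
\]
which is the claimed identity. The only genuine obstacle is the endpoint-continuity in the second paragraph; the interval decomposition and the monotonicity of each $g_i$ are otherwise routine consequences of $Y\ge0$ and the piecewise-constant structure of $\gamma$.
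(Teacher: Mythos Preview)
The paper does not include its own proof of this proposition; Section~2 explicitly defers to the earlier work introducing $\reavar$ for ``the relevant proofs and for additional details.'' Against that, your argument is correct and is the natural way to establish the identity. Decomposing the supremum over $[0,1]$ according to the constancy intervals of $\gamma$, deducing that each $g_i$ is nondecreasing from $Y\ge0$ together with the antitonicity of $\avar$, and noting that for $i=n+1$ the right endpoint $r_{n+1}=1$ already belongs to $I_{n+1}$, are all exactly right. You also correctly isolate the only genuinely nontrivial step: for $i\le n$ the supremum over the half-open interval $[r_{i-1},r_i)$ must be shown to coincide with the value $g_i(r_i)$ at the excluded endpoint. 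Your two-pronged justification---continuity of a finite convex function on $\R$ at interior points in the finite-valued case, and lower semicontinuity of $\avar_{\alpha_i}$ combined with the monotonicity of $g_i$ in the extended-valued case---is sound. To make the latter fully rigorous on $L^0$ without appealing to a generic Fatou property, one may argue directly: since $W_\lambda:=X+(1-\lambda)Y\downarrow X+(1-r_i)Y$ as $\lambda\uparrow r_i$, one has $\VaR_\beta(W_\lambda)\uparrow\VaR_\beta(X+(1-r_i)Y)$ for every $\beta\in(0,1)$, and monotone convergence in the integral representation of $\avar_{\alpha_i}$ (using $\alpha_i<\alpha_{n+1}\le1$ for $i\le n$, so the integrand is bounded below on $(0,\alpha_i]$) gives $g_i(\lambda)\uparrow g_i(r_i)$.
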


We conclude this section by stating some basic properties of $\reavar$, which follow at once from well-known properties of $\avar$.

\begin{proposition}
\label{prop: properties reavar}
Let $\gamma:[0,1]\to[0,1]$ be increasing. The following properties hold:
\begin{enumerate}
  \item {\em Cash-invariance in the first component}: For all $X,Y\in L^0$ and $m\in\R$
\[
\reavar_\gamma(X+m,Y)=\reavar_\gamma(X,Y)-m.
\]
  \item \emph{Monotonicity}: For all $X_1,X_2,Y_1,Y_2\in L^0$ such that $X_1\geq X_2$ and $Y_1\geq Y_2$
\[
\reavar_\gamma(X_1,Y_1) \leq \reavar_\gamma(X_2,Y_2).
\]
  \item \emph{Subadditivity:} For all $X_1,X_2,Y_1,Y_2\in L^0$
\[
\reavar_\gamma(X_1+X_2,Y_1+Y_2) \leq \reavar_\gamma(X_1,Y_1)+\reavar_\gamma(X_2,Y_2).
\]
  \item \emph{Positive homogeneity:} For all $X,Y\in L^0$ and $a\ge0$
\[
\reavar_\gamma(aX,aY) = a\reavar_\gamma(X,Y).
\]
\end{enumerate}
\end{proposition}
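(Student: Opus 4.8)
The plan is to reduce each of the four properties to the corresponding well-known property of $\avar$ via the defining representation $\reavar_\gamma(X,Y)=\sup_{\lambda\in[0,1]}\avar_{\gamma(\lambda)}(X+(1-\lambda)Y)$. The key structural observations are that for each fixed $\lambda\in[0,1]$ the map $(X,Y)\mapsto X+(1-\lambda)Y$ is affine with nonnegative coefficient $(1-\lambda)\ge0$ on $Y$, and that passing to a supremum over $\lambda$ preserves each of the relevant inequalities. Accordingly, in every case I would first establish the statement at a fixed $\lambda$ and only then take the supremum.

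For cash-invariance I would fix $\lambda$ and rewrite $(X+m)+(1-\lambda)Y=(X+(1-\lambda)Y)+m$; cash-invariance of $\avar$ then gives $\avar_{\gamma(\lambda)}((X+m)+(1-\lambda)Y)=\avar_{\gamma(\lambda)}(X+(1-\lambda)Y)-m$, and taking the supremum over $\lambda$ while pulling out the constant $-m$ yields the claim. Monotonicity is equally direct: if $X_1\ge X_2$ and $Y_1\ge Y_2$, then because $(1-\lambda)\ge0$ we have $X_1+(1-\lambda)Y_1\ge X_2+(1-\lambda)Y_2$ pointwise, so monotonicity of $\avar$ gives the inequality at each $\lambda$, and the supremum preserves it. Positive homogeneity follows in the same vein by writing $aX+(1-\lambda)(aY)=a\,(X+(1-\lambda)Y)$, applying positive homogeneity of $\avar$, and using $a\ge0$ to factor the scalar out of the supremum.

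The one step requiring slightly more care is subadditivity, since there a supremum of a sum must be compared with a sum of suprema. For each fixed $\lambda$ I would split $(X_1+X_2)+(1-\lambda)(Y_1+Y_2)=(X_1+(1-\lambda)Y_1)+(X_2+(1-\lambda)Y_2)$ and apply subadditivity of $\avar$ to obtain $\avar_{\gamma(\lambda)}((X_1+X_2)+(1-\lambda)(Y_1+Y_2))\le \avar_{\gamma(\lambda)}(X_1+(1-\lambda)Y_1)+\avar_{\gamma(\lambda)}(X_2+(1-\lambda)Y_2)$. The crucial point is that the \emph{same} $\lambda$ appears in both summands on the right, so taking the supremum over $\lambda$ and invoking the elementary bound $\sup_\lambda(f(\lambda)+g(\lambda))\le \sup_\lambda f(\lambda)+\sup_\lambda g(\lambda)$ delivers exactly $\reavar_\gamma(X_1,Y_1)+\reavar_\gamma(X_2,Y_2)$. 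I do not anticipate any genuine obstacle here; the only things to watch are that every manipulation is carried out at fixed $\lambda$ before passing to the supremum, and that the nonnegativity of $(1-\lambda)$ and of $a$ is explicitly used where monotonicity and homogeneity of $\avar$ are invoked.
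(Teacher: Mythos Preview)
Your proposal is correct and matches the paper's approach exactly: the paper does not spell out a proof but simply states that the properties ``follow at once from well-known properties of $\avar$,'' which is precisely the reduction you carry out---fix $\lambda$, apply the corresponding property of $\avar_{\gamma(\lambda)}$ to the affine combination $X+(1-\lambda)Y$, and then take the supremum over $\lambda$.
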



\section{Optimal portfolio selection under $\reavar$}
\label{sect: portfolio selection}

Risk measures are an important instrument to limit downside risk in portfolio optimization problems. This idea is related to the classical portfolio problem studied in \ci{Markowitz52}, where the objective was to select portfolios of reference financial assets with the goal of maximizing the expected return of the portfolio without exceeding a pre-specified level of standard deviation. In this sense, optimal portfolios represent the best tradeoff between risk and return. A similar problem can be reformulated in the language of asset-liability management for a financial institution, in which case the risk constraint is interpreted as a regulatory constraint. Standard deviation is, however, not a good risk measure in this type of applications because it fails to disentangle upside and downside risk. For this reason, the subsequent literature has investigated the mean-risk problem under different choices of tail risk measures, including $\VaR$ and $\avar$; see, e.g., \ci{RU00}, \ci{basak2001value}, \ci{campbell2001optimal}, \ci{RU02}, \ci{ZhuFu09}. Special attention has been devoted to $\avar$ because the resulting problem becomes convex and allows to exploit the rich methodology of convex optimization to characterize optimal portfolios.

\smallskip

In this section, we study the mean-risk problem for a financial institution that is subject to solvency capital requirements expressed in terms of the convex recovery risk measure $\reavar$. Our goal is to characterize the corresponding optimal portfolios. This task is, at first sight, more challenging than under $\avar$ because its recovery counterpart is defined as a supremum of standard $\avar$'s, making the mean-risk problem mathematically more involved. With the help of a suitable minimax theorem, which we establish for this purpose, we can nevertheless reduce the problem and show that standard techniques from linear programming can be exploited to identify optimal portfolios.

\smallskip

We consider a financial institution with total budget $b>0$ at time $t=0$. The company can invest in $k=1,\dots,K$ financial assets whose prices at dates $t=0,1$ are described by $S^k_t$ and whose relative returns are denoted by $R^k$ so that $S^k_1 = S^k_0(1 + R^k)$. We assume that $R^1,\dots,R^K\in L^1$. For every $k=1, \dots, K$ the company invests a fraction $x^k\ge0$ of its total budget into asset $k$ so that $\sum_{k=1}^K x^k =1$. For later convenience we define
\[
\Delta^K := \left\{\bm{x}\in\R^K_+ \,; \ \sum_{k=1}^K x^k =1\right\}.
\]
We also set $\bm{R} = (R^1, \dots , R^K)^\top$ and $\bm{x} = (x^1, \dots, x^K)^\top$. The total asset value at time $t=1$ is thus equal to
$$
A_1(\bm{x}) := b\sum_{k=1}^K x^k   (1 + R^k) \; = \; b\left( 1 + \sum_{k=1}^K x^k   R^k \right).
$$
In addition, we suppose that the company's liabilities at time $t=1$ amount to a random fraction $Z$ of the initial budget, i.e., the liabilities are equal to $L_1:=bZ$. We assume that $Z\in L^1$. The net asset value of the company equals
$$
E_1(\bm{x}):=A_1(\bm{x})-L_1=b\left( 1+\sum_{k=1}^K  \, x^k  \, R^k-Z\right).
$$
The expected net asset value is therefore given by
$$
\E(E_1(\bm{x})) = b\left( 1+\sum_{k=1}^K  \, x^k  \, \E(R^k)-\E(Z) \right).
$$
The mean-risk problem can equivalently be stated either as the maximization of expected returns under a risk constraint or as the minimization of risk for a target expected return. We focus on the latter formulation. For a given level function $\gamma:[0,1]\to[0,1]$ and for given $a\in\bbr$, we are thus interested in the following problem:
\begin{align}
\label{eq: basic problem 0}
\min_{\bm{x}\in\Delta^K} \; & \reavar_\gamma(E_1(\bm{x}),L_1)\\
& \mbox{s.t.} \quad \E(A_1(\bm{x}))\ge a.\nonumber
\end{align}
It is convenient to formulate the constraint for the expected return instead of the expected asset value. Using the properties of $\avar$ recorded in Proposition \ref{prop: properties reavar} and filtering out all constant terms, we can equivalently focus on the following problem for given $\mu\in\R$:
\begin{align*}
\min_{\bm{x}\in\Delta^K(\mu)}\reavar_\gamma \left(\sum_{k=1}^K x^kR^k-Z,Z\right),
\end{align*}
where the set of admissible portfolios is defined by
\[
\Delta^K(\mu) := \left\{\bm{x}\in\Delta^K \,; \ \sum_{k=1}^K x^k\E(R^k)\ge\mu\right\}.
\]
We focus on the special case of piecewise-constant level functions $\gamma$ introduced in Proposition \ref{prop: parametric gamma avar}. In this case, $\reavar$ is a maximum of finitely many $\avar$'s and the optimal portfolio problem can be equivalently written for given $\mu\in\R$ as:
\begin{align*}
\min_{\bm{x}\in\Delta^K(\mu)}
\max_{i=1,\dots,n+1}\avar_{\alpha_i}\left(\sum_{k=1}^K x^kR^k-r_iZ\right).
\end{align*}
As a last step, we exploit the representation of $\avar$ established in \ci{RU00} and \ci{RU02} to conveniently reformulate the problem above. To this effect, for $i=1, \dots ,n+1$ and $\bm{x}\in\Delta^K$ we can write
\[
\avar_{\alpha_i}\left(\sum_{k=1}^K x^kR^k-r_iZ\right) = \min_{v\in\bbr}\Psi^i (\bm{x}, v),
\]
where the auxiliary function $\Psi^i(\bm{x},\cdot):\R\to\R$ is defined by
$$
\Psi^i(\bm{x},v) := \frac 1 {\alpha_i}\E\left( \max\left\{v- \sum_{k=1}^K x^k   R^k + r_i Z,0\right\}\right) - v.
$$
As a consequence, our original optimal portfolio problem can be equivalently reformulated into the following minimax problem for given $\mu\in\R$:
\begin{align}
\label{eq: basic problem}
\min_{\bm{x}\in\Delta^K(\mu)}
\max_{i=1,\dots,n+1}\min_{v\in\bbr}\Psi^i (\bm{x}, v).
\end{align}
At first sight, this optimization problem seems difficult to cope with because of the entanglement between minimization and maximization. The following theorem shows that, by appropriately increasing the dimensionality of the internal minimization problem, we can interchange the order of minimum and maximum, thereby reducing the problem of finding optimal portfolios to a tractable linear programming problem.

\begin{theorem}
\label{thm:minimax}
For every $\bm{x}\in\Delta^K$ the following minimax equality holds:
$$
\max_{i=1,\dots,n+1}\min_{v\in\bbr}\Psi^i (\bm{x}, v) = \min_{\boldsymbol{v}\in\bbr^{n+1}}\max_{i=1,\dots,n+1}\Psi^i (\bm{x}, v^i).
$$
In particular, problem \eqref{eq: basic problem} can be equivalently written as
\begin{align*}
\min_{\bm{x}\in\Delta^K(\mu)}
\min_{\boldsymbol{v}\in\bbr^{n+1}}\max_{i=1,\dots,n+1}\Psi^i (\bm{x}, v^i).
\end{align*}
\end{theorem}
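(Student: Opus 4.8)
The plan is to exploit the fact that, on the right-hand side, the $i$-th function $\Psi^i(\bm{x},\cdot)$ is evaluated at its own coordinate $v^i$, so that --- unlike a genuine minimax over a single shared variable --- the outer minimization decouples completely across the index $i$. I would first fix $\bm{x}\in\Delta^K$ and abbreviate $g_i(v):=\Psi^i(\bm{x},v)$. Each $g_i$ is convex in $v$, being the sum of the convex map $v\mapsto\frac{1}{\alpha_i}\E(\max\{v-\sum_{k=1}^K x^kR^k+r_iZ,0\})$ and the linear map $v\mapsto -v$; by the Rockafellar--Uryasev representation recalled above, $\inf_v g_i(v)=\avar_{\alpha_i}(\sum_{k=1}^K x^kR^k-r_iZ)$ is finite and (for $\alpha_i\in(0,1)$, under $R^k,Z\in L^1$) attained at any $\alpha_i$-quantile of $\sum_{k=1}^K x^kR^k-r_iZ$. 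Writing $m_i:=\min_v g_i(v)$, the left-hand side is then exactly $\max_i m_i$, and the task reduces to showing $\min_{\bm{v}}\max_i g_i(v^i)=\max_i m_i$.

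For the inequality ``$\ge$'', I would fix an arbitrary $\bm{v}=(v^1,\dots,v^{n+1})\in\R^{n+1}$ and note that, for each index $j$, the $j$-th term appears inside the maximum, so $\max_i g_i(v^i)\ge g_j(v^j)\ge m_j$; taking the maximum over $j$ gives $\max_i g_i(v^i)\ge\max_j m_j$, and passing to the infimum over $\bm{v}$ yields $\min_{\bm{v}}\max_i g_i(v^i)\ge\max_j m_j$. For the reverse inequality, I would choose $v^i_\ast$ to be a minimizer of $g_i$ and set $\bm{v}_\ast:=(v^1_\ast,\dots,v^{n+1}_\ast)$; then $g_i(v^i_\ast)=m_i$ for every $i$, so $\max_i g_i(v^i_\ast)=\max_i m_i$ and hence $\min_{\bm{v}}\max_i g_i(v^i)\le\max_i m_i$. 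The two bounds give the minimax equality, and the reformulation of problem \eqref{eq: basic problem} follows by inserting it under the outer minimization over $\bm{x}\in\Delta^K(\mu)$.

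I do not expect a genuine analytical obstacle: the whole content is the structural observation that introducing a separate auxiliary variable $v^i$ for each $\avar_{\alpha_i}$ removes the coupling that would otherwise prevent interchanging $\max_i$ and $\min_v$ (for a shared $v$ the two sides generally differ). The one point that deserves care is the attainment of the inner minima, which is what makes the right-hand side a true minimum and the upper bound exact. If one insists on allowing boundary levels such as $\alpha_i=1$ --- where the infimum defining $\avar$ need not be attained --- the same conclusion is reached by instead picking $v^i$ with $g_i(v^i)\le m_i+\e$ and letting $\e\downarrow 0$, so that the outer minimum is interpreted as an infimum.
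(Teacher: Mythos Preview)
Your argument is correct. The observation that the right-hand side decouples across $i$, because each $g_i$ is evaluated at its own coordinate $v^i$, reduces the identity to the elementary statement $\min_{\bm{v}}\max_i g_i(v^i)=\max_i\min_{v}g_i(v)$, which you establish cleanly by the two-sided bound using attainment of each $m_i$.

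The paper proceeds differently: it defers the minimax equality to a general lemma in the appendix (Theorem~\ref{theo: minimax}), which rewrites $\max_i$ as a maximum over the simplex $\Delta^{n+1}$, restricts to the compact product of the minimizing intervals $\cI_i$, and then invokes Fan's minimax theorem to interchange $\max_{\bm{\theta}}$ and $\min_{\bm{v}}$. Your route is more elementary and avoids any external minimax result; it exploits directly the decoupled structure that makes the problem essentially separable rather than a genuine saddle-point problem. What the paper's formulation buys is a single lemma that is reused verbatim in the robust settings (Theorems~\ref{thm:minimax robust} and~\ref{thm:minimax robust box}), but even there the final step is exactly the decoupling you identify, so your direct argument would apply just as well. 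Your remark on handling possible non-attainment via an $\varepsilon$-approximation is also sound, though under the paper's standing assumptions ($R^k,Z\in L^1$ and $\alpha_i>0$) the minimizers exist at the $\alpha_i$-quantiles, as you note.
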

\begin{proof}
It is known from \ci{RU00} that, for each $i=1,\dots,n+1$, the convex function $\Psi^i(\bm{x},\cdot)$ attains its minimum on the (nonempty) compact interval $[q_i^-,q_i^+]$, where $q_i^-$ and $q_i^+$ are the lower, respectively upper, $\alpha_i$-quantiles of $\sum_{k=1}^{K}x^k R^k-r_iZ$. The desired minimax equality therefore follows at once from Theorem \ref{theo: minimax} in the appendix.
\end{proof}

\smallskip

In view of Theorem~\ref{thm:minimax}, the problem of determining the portfolios with miminal risk for a fixed expected target return $\mu\in\R$ can be equivalently expressed as
\begin{align*}
\min_{(\bm{x},\mathbf{v},T)\in \Delta^K(\mu)\times \R^{n+1}\times \R}\left\{T \,; \ \Psi^i(\bm{x},{v}^i) \le T, \ i=1,\dots,n+1\right\}.
\end{align*}
The evaluation of the functions $\Psi^i$'s involves the calculation of an expected value. In typical real-world applications, this is performed through Monte Carlo simulation. If $(\mathbf{R}_1,Z_1),\dots,(\mathbf{R}_S,Z_S)$ are $S$ independent simulations of the pair $(\mathbf{R},Z)$, we obtain the associated problem
\begin{align*}
\min_{(\bm{x},\mathbf{v},T)\in \Delta^K(\mu)\times \R^{n+1}\times \R}\left\{T \,; \ \frac{1}{\alpha_i S}\sum_{s=1}^S \max\left\{v^i-\sum_{k=1}^{K}x^kR_s^k+r_iZ_s,0\right\}-v^i \le T, \ i=1,\dots,n+1\right\}.
\end{align*}
The original problem can eventually be formulated as a tractable linear program of the form
\begin{align*}
	\min T\quad& \\
	s.t.\quad
	& \frac{1}{\alpha_i S}\sum_{s=1}^{S} u_j^i-v^i\le T, \ \ \ i=1,\dots,n+1,\\
	& u_s^i\ge v^i-\sum_{k=1}^{K}x^kR^{k}_s+r_iZ_s, \ \ \ i=1,\dots,n+1, \ s=1,\dots,S,\\
	& u_s^i\ge 0, \ \ \ i=1,\dots,n+1, \ s=1,\dots,S,\\
	& \frac{1}{m}\sum_{s=1}^{S}\sum_{k=1}^K x^k R^{k}_s\ge \mu,\\
	& \sum_{k=1}^K x^k=1,\\
	& x^k\ge 0,\quad k=1,\dots,K,\\
    & T,v^1,\dots,v^{n+1}\in\R.
\end{align*}

\section{Robust optimal portfolio selection under $\reavar$}

In this section we study the optimal portfolio problem under uncertainty about the underlying probabilistic model. We will show that, in spite of the added complexity, the problem can still be reduced to a tractable linear programming problem.

\smallskip

Throughout the section we fix a measurable space $(\Omega,\cF)$ and denote by $\cL^0$ the vector space of Borel measurable functions $X:\Omega\to\R$. The set of all probability measures on $(\Omega,\cF)$ is denoted by $\cP$. Throughout we use a superscript to make explicit the dependence of our risk measures on the chosen probability measure in $\cP$.

\begin{definition}
Let $\gamma: [0,1] \to [0,1]$ be an increasing function and $\cM\subset\cP$. The \emph{Worst-Case Recovery Average Value at Risk} of $(X,Y)\in\cL^0\times\cL^0$ with level function $\gamma$ and uncertainty set $\cM$ is defined by
\[
\reavar^\cM_\gamma (X,Y) := \sup_{\probp\in\cM}\reavar^\probp_\gamma(X,Y).
\]
\end{definition}

\smallskip

For a given level function $\gamma:[0,1]\to[0,1]$ and a given uncertainty set $\cM\subset\cP$, and for given $a\in\bbr$, we are interested in the following robust version of problem \eqref{eq: basic problem 0}:
\begin{align}
\label{eq: robust problem}
\min_{\bm{x}\in\Delta^K} \; & \reavar^\cM_\gamma(E_1(\bm{x}),L_1)\\
& \mbox{s.t.} \quad \inf_{\probp\in\cM}\E_\probp(A_1(\bm{x}))\ge a.\nonumber
\end{align}
In the sequel we specify our analysis to two ways to define the uncertainty set $\cM$, which have been applied in \ci{ZhuFu09} to the study of robust mean-risk portfolio problems where the reference risk measure is $\avar$.

\subsection{Mixture uncertainty}

In a first step, we assume the existence of a finite number of benchmark probability measures, denoted by $\probp_1,\dots,\probp_m\in\cP$, and consider all possible convex combinations mixing them. This corresponds to the uncertainty set
\[
\cM_{mix} := \left\{\sum_{j=1}^{m}\lambda_j \probp_j \,; \ \bm{\lambda}\in\Delta^m\right\}.
\]
Consider a piecewise constant level function $\gamma$ as in Proposition~\ref{prop: parametric gamma avar}. For $i=1,\dots,n+1$ and $\bm{x}\in\Delta^K$ and for $j=1,\dots,m$ define the auxiliary function $\Psi^i_j(\bm{x},\cdot):\R\to\R$ by
\[
\Psi^i_j(\bm{x},v) := \frac{1}{\alpha_i}\E_{\probp_j}\left( \max\left\{v- \sum_{k=1}^K x^k R^k + r_i Z,0\right\}\right) - v.
\]
Repeating the reasoning in Section~\ref{sect: portfolio selection} we can recast the robust portfolio problem \eqref{eq: robust problem} with uncertainty set $\cM_{mix}$ in the following equivalent form for given $\mu\in\R$:
\begin{align}
\label{eq: basic problem robust}
\min_{\bm{x}\in\Delta^K_{mix}(\mu)} &
\max_{\bm{\lambda}\in\Delta^m}\max_{i=1,\dots,n+1}\min_{v\in\bbr}\sum_{j=1}^m\lambda_j\Psi^i_j(\bm{x},v),
\end{align}
where the set of admissible portfolios is defined by
\[
\Delta^K_{mix}(\mu) := \left\{\bm{x}\in\Delta^K \,; \ \min_{j=1,\dots,m}\sum_{k=1}^Kx^k\E_{\probp_j}(R^k)\ge\mu\right\}.
\]
The next theorem shows that the maxima and minima appearing in problem \eqref{eq: basic problem robust} can be reordered and coupled, thereby reducing the problem to a tractable linear programming problem.

\begin{theorem}
\label{thm:minimax robust}
For every $\bm{x}\in\Delta^K$ the following minimax equality holds:
$$
\max_{\bm{\lambda}\in\Delta^m}\max_{i=1,\dots,n+1}\min_{v\in\bbr}\sum_{j=1}^m\lambda_j\Psi^i_j(\bm{x},v) = \min_{\boldsymbol{v}\in\bbr^{n+1}}\max_{j=1,\dots,m}\max_{i=1,\dots,n+1}\Psi^i_j(\bm{x},v^i).
$$
In particular, problem \eqref{eq: basic problem robust} can be equivalently written as
\begin{align*}
\min_{\bm{x}\in\Delta^K_{mix}(\mu)} &
\min_{\boldsymbol{v}\in\bbr^{n+1}}\max_{j=1,\dots,m}\max_{i=1,\dots,n+1}\Psi^i_j(\bm{x},v^i).
\end{align*}
\end{theorem}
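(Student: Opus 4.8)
The plan is to peel off the outer maximisation over $\bm{\lambda}$ first and thereby reduce the claim to the same one-dimensional minimax principle (Theorem~\ref{theo: minimax}) that already underlies Theorem~\ref{thm:minimax}. Since the two outer maxima on the left-hand side commute, I would begin by rewriting
\[
\max_{\bm{\lambda}\in\Delta^m}\max_{i=1,\dots,n+1}\min_{v\in\bbr}\sum_{j=1}^m\lambda_j\Psi^i_j(\bm{x},v)
=\max_{i=1,\dots,n+1}\max_{\bm{\lambda}\in\Delta^m}\min_{v\in\bbr}\sum_{j=1}^m\lambda_j\Psi^i_j(\bm{x},v),
\]
so that it suffices to analyse the inner expression $\max_{\bm{\lambda}}\min_v$ for each fixed $i$ separately.

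The first key step is to interchange $\max_{\bm{\lambda}}$ and $\min_v$ for fixed $i$. I would apply Sion's minimax theorem to the map $(\bm{\lambda},v)\mapsto\sum_{j=1}^m\lambda_j\Psi^i_j(\bm{x},v)$, which is affine (hence concave and continuous) in $\bm{\lambda}$ on the compact convex simplex $\Delta^m$ and convex (hence continuous) in $v$ on $\bbr$, with the compactness of $\Delta^m$ compensating for the unboundedness of the $v$-domain. This yields
\[
\max_{\bm{\lambda}\in\Delta^m}\min_{v\in\bbr}\sum_{j=1}^m\lambda_j\Psi^i_j(\bm{x},v)
=\min_{v\in\bbr}\max_{\bm{\lambda}\in\Delta^m}\sum_{j=1}^m\lambda_j\Psi^i_j(\bm{x},v)
=\min_{v\in\bbr}\max_{j=1,\dots,m}\Psi^i_j(\bm{x},v),
\]
where the last equality holds because a linear functional on a simplex attains its maximum at a vertex. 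Substituting back, the left-hand side of the asserted identity equals $\max_{i}\min_{v}g^i(v)$ with $g^i:=\max_{j}\Psi^i_j(\bm{x},\cdot)$.

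It then remains to establish $\max_{i}\min_{v}g^i(v)=\min_{\boldsymbol{v}\in\bbr^{n+1}}\max_{i}g^i(v^i)$, which is exactly the situation covered by Theorem~\ref{theo: minimax}, provided each $g^i$ is convex and attains its minimum on a nonempty compact interval. Convexity is immediate, since $g^i$ is a finite maximum of the convex maps $\Psi^i_j(\bm{x},\cdot)$. For the attainment I would argue as in the proof of Theorem~\ref{thm:minimax}, invoking \ci{RU00}: each $\Psi^i_j(\bm{x},\cdot)$ attains its minimum on the compact interval of $\alpha_i$-quantiles of $\sum_k x^k R^k-r_iZ$ under $\probp_j$, and since each $\Psi^i_j(\bm{x},\cdot)$ diverges to $+\infty$ as $|v|\to\infty$, so does their maximum $g^i$; hence $g^i$ is coercive and its minimiser set is a nonempty compact interval. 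Applying Theorem~\ref{theo: minimax} and finally commuting the two finite maxima, $\max_i\max_j\Psi^i_j(\bm{x},v^i)=\max_j\max_i\Psi^i_j(\bm{x},v^i)$, delivers the right-hand side and hence the minimax equality, after which the reformulation of problem \eqref{eq: basic problem robust} is immediate. The step I expect to require the most care is the application of Sion's theorem, specifically verifying that compactness of $\Delta^m$ alone suffices despite the noncompact inner domain $\bbr$, together with the coercivity argument guaranteeing that the minima defining $g^i$ are genuinely attained on a bounded set.
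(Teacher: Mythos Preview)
Your proposal is correct and follows essentially the same route as the paper: interchange the two outer maxima, swap $\max_{\bm{\lambda}}$ and $\min_v$ for each fixed $i$ via a minimax theorem, reduce the simplex maximum to a vertex maximum $\max_j$, and then invoke Theorem~\ref{theo: minimax} on the convex functions $g^i=\max_j\Psi^i_j(\bm{x},\cdot)$. The only cosmetic difference is that the paper restricts $v$ to the explicit compact quantile interval $\cI_i=[\min_j q_{i,j}^-,\max_j q_{i,j}^+]$ and applies Fan's minimax theorem on $\Delta^m\times\cI_i$, whereas you appeal to Sion's theorem with only $\Delta^m$ compact and use coercivity of the $\Psi^i_j(\bm{x},\cdot)$'s to guarantee attainment; both variants are valid and lead to the same conclusion.
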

\begin{proof}
By \ci{RU00}, for all $i=1,\dots,n+1$ and $j=1,\dots,m$ the convex function $\Psi^i_j(\bm{x},\cdot)$ attains its minimum on the (nonempty) compact interval $[q_{i,j}^-,q_{i,j}^+]$, where $q_{i,j}^-$ and $q_{i,j}^+$ are the lower, respectively upper, $\alpha_i$-quantiles under $\probp_j$ of the random variable $\sum_{k=1}^{K}x^k R^k-r_iZ$. For every $i=1,\dots,n+1$ and for every choice of $\bm{\lambda}\in\Delta^m$ the function $\sum_{j=1}^m\lambda_j\Psi^i_j(\bm{x},\cdot)$ must therefore attain its minimum in the same compact interval, namely
\[
\cI_i := \left[\min_{j=1,\dots,m}q_{i,j}^-,\max_{j=1,\dots,m}q_{i,j}^+\right].
\]
For every $i=1,\dots,n+1$ the sets $\Delta^m$ and $\cI_i$ are compact and convex and the function $(\boldsymbol{\lambda},v) \mapsto \sum_{j=1}^m\lambda_j\Psi^i_j(\bm{x},v)$ is linear in $\boldsymbol{\lambda}$ and convex in $v$. As a consequence, the minimax theorem in \ci{fan1953minimax} delivers for every $i=1,\dots,n+1$
\begin{align*}
\label{eq: robust minimax 1}
\max_{\bm{\lambda}\in\Delta^m}\min_{v\in\bbr}\sum_{j=1}^m\lambda_j\Psi^i_j(\bm{x},v) &= \max_{\bm{\lambda}\in\Delta^m}\min_{v\in\cI_i}\sum_{j=1}^m\lambda_j\Psi^i_j(\bm{x},v)
= \min_{v\in\cI_i}\max_{\bm{\lambda}\in\Delta^m}\sum_{j=1}^m\lambda_j\Psi^i_j(\bm{x},v) \\
&= \min_{v\in\bbr}\max_{\bm{\lambda}\in\Delta^m}\sum_{j=1}^m\lambda_j\Psi^i_j(\bm{x},v) = \min_{v\in\bbr}\max_{j=1,\dots,m}\Psi^i_j(\bm{x},v).
\end{align*}
Since we can always interchange two consecutive maxima, we infer that
\[
\max_{\bm{\lambda}\in\Delta^m}\max_{i=1,\dots,n+1}\min_{v\in\bbr}\sum_{j=1}^m\lambda_j\Psi^i_j(\bm{x},v) = \max_{i=1,\dots,n+1}\min_{v\in\bbr}\max_{j=1,\dots,m}\Psi^i_j(\bm{x},v).
\]
As for every $i=1,\dots,n+1$ the convex function $\max_{j=1,\dots,m}\Psi^i_j(\bm{x},\cdot)$ attains its minimum on $\cI_i$, a direct application of Theorem \ref{theo: minimax} in the appendix yields the desired minimax equality.
\end{proof}

\smallskip

In the spirit of Section \ref{sect: portfolio selection}, one can use Theorem \ref{thm:minimax robust} to conveniently reformulate the portfolio problem under mixture uncertainty as
\begin{align*}
	\min_{(\bm{x},\mathbf{v},T)\in \Delta^K_{mix}(\mu)\times \R^{n+1}\times \R}\left\{T \,; \ \Psi_j^i(\bm{x},v^i) \le T, \ j=1,\dots,m, \ i=1,\dots,n+1\right\}.
\end{align*}
By approximating the expected value in the functions $\Psi_j^i$'s using Monte Carlo simulation as before, the problem can again be written as a tractable linear programming problem.

\subsection{Box uncertainty}
\label{sect: box uncertainty}

In a second step, we fix a benchmark probability measure $\probp\in\cP$ under which the random vector $\bm{S}=(R^1,\dots,R^K,Z)$ is discrete and takes the values $\bm{s_1},\dots,\bm{s_m}\in\R^{K+1}$. To simplify the notation, we set for every $j=1,\dots,m$
\[
\pi_j := \probp(\bm{S}=\bm{s}_j).
\]
We consider all possible probability measures under which $\bm{S}$ remains discrete and that are obtained by a slight perturbation of the reference probability measure $\probp$. The set of perturbation parameters is defined for given $\underline{\bm{\varepsilon}},\overline{\bm{\varepsilon}}\in\R^m$ such that $\underline{\bm{\varepsilon}}\le\overline{\bm{\varepsilon}}$ and $\bm{\pi}+\underline{\bm{\varepsilon}}\ge\bm{0}$ by
\[
\cE := \left\{\bm{\varepsilon}\in\R^m \,; \ \underline{\bm{\varepsilon}}\le\bm{\varepsilon}\le\overline{\bm{\varepsilon}}, \ \sum_{j=1}^m\varepsilon_j=0\right\}
\]
For every $\bm{\varepsilon}\in\cE$ we consider a probability measure $\probp_{\bm{\varepsilon}}\in\cP$ such that for $j=1,\dots,m$
\[
\probp_{\bm{\varepsilon}}(\bm{S}=\bm{s}_j) = \pi_j+\varepsilon_j.
\]
The corresponding uncertainty set is given by
\[
\cM_{box} := \{\probp_{\bm{\varepsilon}}\in\cP \,; \ \bm{\varepsilon}\in\cE\}.
\]
Consider a piecewise constant level function $\gamma$ as in Proposition~\ref{prop: parametric gamma avar}. For $i=1,\dots,n+1$ and $\bm{x}\in\Delta^K$ and for $\bm{\varepsilon}\in\cE$ define the auxiliary function $\Psi^i_{\bm{\varepsilon}}(\bm{x},\cdot):\R\to\R$ by
\[
\Psi^i_{\bm{\varepsilon}}(\bm{x},v) := \frac{1}{\alpha_i}\E_{\probp_{\bm{\varepsilon}}}\left( \max\left\{v- \sum_{k=1}^K x^k R^k + r_i Z,0\right\}\right) - v.
\]
Repeating the reasoning in Section~\ref{sect: portfolio selection} we can recast the robust portfolio problem \eqref{eq: robust problem} with uncertainty set $\cM_{box}$ in the following equivalent form for given $\mu\in\R$:
\begin{align}
\label{eq: basic problem robust box}
\min_{\bm{x}\in\Delta^K_{box}(\mu)} &
\max_{\bm{\varepsilon}\in\cE}\max_{i=1,\dots,n+1}\min_{v\in\bbr}\Psi^i_{\bm{\varepsilon}}(\bm{x},v),
\end{align}
where the set of admissible portfolios is defined by
\[
\Delta^K_{box}(\mu) := \left\{\bm{x}\in\Delta^K \,; \ \min_{\bm{\varepsilon}\in\cE}\sum_{k=1}^Kx^k\E_{\probp_{\bm{\varepsilon}}}(R^k)\ge\mu\right\}
\]
Once again, the maxima and minima appearing in problem \eqref{eq: basic problem robust box} can be reordered to yield a tractable linear programming problem. This is recorded in the next result.

\begin{theorem}
\label{thm:minimax robust box}
For every $\bm{x}\in\bbr^K$ the following minimax equality holds:
$$
\max_{\bm{\varepsilon}\in\cE}\max_{i=1,\dots,n+1}\min_{v\in\bbr}\Psi^i_{\bm{\varepsilon}}(\bm{x},v) = \min_{\boldsymbol{v}\in\bbr^{n+1}}\max_{\bm{\varepsilon}\in\cE}\max_{i=1,\dots,n+1}\Psi^i_{\bm{\varepsilon}}(\bm{x},v^i).
$$
In particular, problem \eqref{eq: basic problem robust box} can be equivalently written as
\[
\min_{\bm{x}\in\Delta^K_{box}(\mu)}
\min_{\boldsymbol{v}\in\bbr^{n+1}}
\max_{\bm{\varepsilon}\in\cE}
\max_{i=1,\dots,n+1}\Psi^i_{\bm{\varepsilon}}(\bm{x},v^i).
\]
\end{theorem}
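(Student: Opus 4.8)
The plan is to mirror the argument used for Theorem~\ref{thm:minimax robust}, exploiting the fact that the only structural ingredient that changes is the geometry of the uncertainty set. The decisive observation is that, since $\probp_{\bm{\varepsilon}}(\bm{S}=\bm{s}_j)=\pi_j+\varepsilon_j$, writing $\bm{s}_j=(s^1_j,\dots,s^{K+1}_j)$ each auxiliary function takes the explicit form
\[
\Psi^i_{\bm{\varepsilon}}(\bm{x},v)=\frac{1}{\alpha_i}\sum_{j=1}^m(\pi_j+\varepsilon_j)\max\left\{v-\sum_{k=1}^Kx^ks^k_j+r_is^{K+1}_j,\,0\right\}-v,
\]
which is \emph{affine} in $\bm{\varepsilon}$ and convex in $v$, while $\cE$ is a compact convex polytope (a box intersected with a hyperplane). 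This is exactly the linear-convex, compact-convex structure demanded by the minimax theorem of \ci{fan1953minimax}, now with the roles of $\Delta^m$ and $\bm{\lambda}$ played by $\cE$ and $\bm{\varepsilon}$.

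First I would establish a compact interval $\cI_i$ on which $\Psi^i_{\bm{\varepsilon}}(\bm{x},\cdot)$ is minimized \emph{simultaneously} for all $\bm{\varepsilon}\in\cE$. By \ci{RU00} the minimizer of $\Psi^i_{\bm{\varepsilon}}(\bm{x},\cdot)$ is an $\alpha_i$-quantile under $\probp_{\bm{\varepsilon}}$ of $\sum_{k=1}^Kx^kR^k-r_iZ$; since every $\probp_{\bm{\varepsilon}}$ is supported on the finite set $\{\bm{s}_1,\dots,\bm{s}_m\}$, any such quantile lies in the fixed interval $\cI_i:=[\min_j w_j,\max_j w_j]$ with $w_j:=\sum_{k=1}^Kx^ks^k_j-r_is^{K+1}_j$, which does not depend on $\bm{\varepsilon}$. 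Hence for each $i$ the inner minimization over $\bbr$ may be replaced by minimization over the compact convex set $\cI_i$.

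Next, for each fixed $i$ I would apply \ci{fan1953minimax} on $\cE\times\cI_i$ to interchange the inner extrema,
\[
\max_{\bm{\varepsilon}\in\cE}\min_{v\in\bbr}\Psi^i_{\bm{\varepsilon}}(\bm{x},v)=\max_{\bm{\varepsilon}\in\cE}\min_{v\in\cI_i}\Psi^i_{\bm{\varepsilon}}(\bm{x},v)=\min_{v\in\cI_i}\max_{\bm{\varepsilon}\in\cE}\Psi^i_{\bm{\varepsilon}}(\bm{x},v)=\min_{v\in\bbr}\max_{\bm{\varepsilon}\in\cE}\Psi^i_{\bm{\varepsilon}}(\bm{x},v),
\]
the last equality holding because the convex function $v\mapsto\max_{\bm{\varepsilon}\in\cE}\Psi^i_{\bm{\varepsilon}}(\bm{x},v)$, being nonincreasing to the left of $\min_j w_j$ and nondecreasing to the right of $\max_j w_j$, again attains its minimum on $\cI_i$. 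Since $\max_{\bm{\varepsilon}\in\cE}$ and $\max_{i}$ are two consecutive maxima and may be freely swapped, combining this with the display above yields
\[
\max_{\bm{\varepsilon}\in\cE}\max_{i=1,\dots,n+1}\min_{v\in\bbr}\Psi^i_{\bm{\varepsilon}}(\bm{x},v)=\max_{i=1,\dots,n+1}\min_{v\in\bbr}\max_{\bm{\varepsilon}\in\cE}\Psi^i_{\bm{\varepsilon}}(\bm{x},v).
\]

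Finally, setting $\Phi^i(\bm{x},\cdot):=\max_{\bm{\varepsilon}\in\cE}\Psi^i_{\bm{\varepsilon}}(\bm{x},\cdot)$, each $\Phi^i(\bm{x},\cdot)$ is convex and attains its minimum on $\cI_i$, so Theorem~\ref{theo: minimax} in the appendix applies verbatim and decouples the outer maximum over $i$ from the minimum over $v$ by lifting $v$ to a vector $\bm{v}\in\bbr^{n+1}$:
\[
\max_{i=1,\dots,n+1}\min_{v\in\bbr}\Phi^i(\bm{x},v)=\min_{\bm{v}\in\bbr^{n+1}}\max_{i=1,\dots,n+1}\Phi^i(\bm{x},v^i).
\]
Swapping the maxima over $i$ and over $\bm{\varepsilon}$ once more on the right-hand side gives precisely the claimed identity. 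I expect the only genuinely delicate point to be the uniform (in $\bm{\varepsilon}$) localization of the minimizers on a single compact interval $\cI_i$, together with the attendant verification that $\max_{\bm{\varepsilon}\in\cE}\Psi^i_{\bm{\varepsilon}}$ is still minimized there; everything else is a formal rearrangement identical to the mixture case.
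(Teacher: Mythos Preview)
Your proof is correct and follows essentially the same route as the paper's own argument: localize the inner minimization to a compact interval, apply Fan's minimax theorem on $\cE\times\cI_i$ to swap $\max_{\bm{\varepsilon}}$ and $\min_v$, interchange the two consecutive maxima, and finish via Theorem~\ref{theo: minimax}. The only cosmetic difference is your choice of $\cI_i=[\min_j w_j,\max_j w_j]$, the full range of the discrete support, whereas the paper takes $\cI_i=[\min_{\bm{\varepsilon}\in\cE}q^-_{i,\bm{\varepsilon}},\max_{\bm{\varepsilon}\in\cE}q^+_{i,\bm{\varepsilon}}]$; your interval is slightly larger but more explicit and spares you from optimizing quantiles over $\cE$.
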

\begin{proof}
We mimic the argument in the proof of Theorem \ref{thm:minimax robust}. By \ci{RU00}, for all $i=1,\dots,n+1$ and $\bm{\varepsilon}\in\cE$ the convex function $\Psi^i_{\bm{\varepsilon}}(\bm{x},\cdot)$ attains its minimum on the (nonempty) compact interval $[q_{i,\bm{\varepsilon}}^-,q_{i,\bm{\varepsilon}}^+]$, where $q_{i,\bm{\varepsilon}}^-$ and $q_{i,\bm{\varepsilon}}^+$ are the lower, respectively upper, $\alpha_i$-quantiles under $\probp_{\bm{\varepsilon}}$ of the discrete random variable $\sum_{k=1}^{K}x^k R^k-r_iZ$. For every $i=1,\dots,n+1$ we can thus define
\[
\cI_i := \left[\min_{\bm{\varepsilon}\in\cE}q_{i,\bm{\varepsilon}}^-,\max_{\bm{\varepsilon}\in\cE}q_{i,\bm{\varepsilon}}^+\right].
\]
Clearly, for every $i=1,\dots,n+1$ the sets $\cE$ and $\cI_i$ are compact and convex and the function $(\bm{\varepsilon},v) \mapsto \Psi^i_{\bm{\varepsilon}}(\bm{x},v)$ is linear in $\bm{\varepsilon}$ and convex in $v$. As a consequence, the minimax theorem in \ci{fan1953minimax} delivers for every $i=1,\dots,n+1$
\begin{align*}
\label{eq: robust minimax 1 box}
\max_{\bm{\varepsilon}\in\cE}\min_{v\in\bbr}\Psi^i_{\bm{\varepsilon}}(\bm{x},v) &= \max_{\bm{\varepsilon}\in\cE}\min_{v\in\cI_i}\Psi^i_{\bm{\varepsilon}}(\bm{x},v)
= \min_{v\in\cI_i}\max_{\bm{\varepsilon}\in\cE}\Psi^i_{\bm{\varepsilon}}(\bm{x},v) \\
&= \min_{v\in\bbr}\max_{\bm{\varepsilon}\in\cE}\Psi^i_{\bm{\varepsilon}}(\bm{x},v).
\end{align*}
Since we can always interchange two consecutive maxima, we infer that
\[
\max_{\bm{\varepsilon}\in\cE}\max_{i=1,\dots,n+1}\min_{v\in\bbr}\Psi^i_{\bm{\varepsilon}}(\bm{x},v) = \max_{i=1,\dots,n+1}\min_{v\in\bbr}\max_{\bm{\varepsilon}\in\cE}\Psi^i_{\bm{\varepsilon}}(\bm{x},v).
\]
As for every $i=1,\dots,n+1$ the convex function $\max_{\bm{\varepsilon}\in\cE}\Psi^i_{\bm{\varepsilon}}(\bm{x},\cdot)$ attains its minimum on $\cI_i$, a direct application of Theorem \ref{theo: minimax} in the appendix yields the desired minimax equality.
\end{proof}

\smallskip

Similarly to the case of mixture uncertainty, Theorem \ref{thm:minimax robust box} can be used to conveniently reformulate the original portfolio problem as
\begin{align*}
	\min_{(\bm{x},\mathbf{v},T)\in \Delta^K_{box}(\mu)\times \R^{n+1}\times \R}\left\{T \,; \ \Psi_{\bm{\varepsilon}}^i(\bm{x},v^i) \le T, \ \bm{\varepsilon}\in\cE, \ i=1,\dots,n+1\right\}.
\end{align*}
Once again, by approximating the expected value in the functions $\Psi_{\bm{\varepsilon}}^i$'s using Monte Carlo simulation, the problem can be written as a tractable linear programming problem. The procedure is described in detail for the case where $n=1$ in the case study in Section~\ref{sect: efficient frontier}.


\section{Numerical illustrations}

This final section is devoted to an illustration of mean-risk portfolio selection under $\reavar$ in the context of two cases studies. In the first case study we compare optimal portfolios under $\avar$ and $\reavar$ and document that already the choice of a simple level function $\gamma$ may lead to a drastic difference in the composition of optimal portfolios. More specifically, in the presence of a risk-free and a risky asset, there are realistic situations where it is optimal to fully invest in the risky asset under $\avar$ whereas the optimal holding in the risky asset is capped if $\reavar$ is used to measure risk. In the second case study we focus on the more computational aspects and show that, in the presence of two risky assets whose returns follow standard distributions encountered in applications, the determination of robust efficient frontiers under $\reavar$ is feasible and computationally similar to the one under $\avar$. To achieve this, we exploit the minimax theorems established in the previous section and combine them with standard Monte Carlo simulation.


\subsection{Case study 1: Optimal portfolio without dependence uncertainty}

We consider a financial institution with total budget $b > 0$ at the initial date. The management can invest in two assets with one-period relative returns given by
\[
R^1=0, \ \ \ \ \ \
R^2=\begin{cases}
0.5\% & \mbox{with probability} \ 99.9\%,\\
-4\% & \mbox{with probability} \ 0.1\%.
\end{cases}
\]
At the terminal date the institution is exposed to deterministic liabilities amounting to $L_1=b\ell$ for given $\ell\in(0,1)$. We denote by $x$ the fraction of total budget that is invested in the risky asset. The corresponding end-of-period net asset value is therefore equal to
\[
E_1(x) := b(1-x)(1+R^1)+bx(1+R^2)-L_1 = b(1+xR^2-\ell).
\]
Set $\alpha=1\%$ and for $\mu=0$ define the set of admissible holdings in the risky asset by
\[
\cX := \{x\in[0,1] \,; \ (1-x)\E(R^1)+x\E(R^2)\ge\mu\}.
\]
Since $\E(R^2)=0.004955$, we readily see that $\cX=[0,1]$. In a first step we focus on the problem
\begin{align}
\label{eq: illustration 1 avar}
\min_{x\in[0,1]}\avar_\alpha(E_1(x)).
\end{align}
Using the properties of $\avar$ we can equivalently write
\[
\min_{x\in[0,1]}\avar_\alpha(E_1(x))=b\left(\min_{x\in[0,1]}\{\avar_\alpha(R^2)x\}-1+\ell\right).
\]
This shows that the composition of the optimal portfolio will be driven by the sign of $\avar_\alpha(R^2)$. A direct computation shows that
\[
\avar_\alpha(R^2)=\frac{1}{\alpha}\left(\frac{1}{1000}\frac{4}{100}-\frac{9}{1000}\frac{5}{1000}\right)=-0.0005<0.
\]
As a result, problem \eqref{eq: illustration 1 avar} admits a unique optimal solution given by $x=1$. In words, the optimal portfolio under $\avar$ corresponds to investing the entire available budget into the risky asset. We turn to investigating how the optimal portfolio changes if $\avar$ is replaced by $\reavar$. To this effect, take $\beta\in(0,\alpha)$ and $r\in(0,1)$ and consider a simple level function of the form
\[
\gamma(\lambda):=
\begin{cases}
\alpha & \mbox{if} \ \lambda\in[r,1],\\
\beta & \mbox{if} \ \lambda\in[0,r).
\end{cases}
\]
We modify problem \eqref{eq: illustration 1 avar} by replacing $\avar$ with $\reavar$, thereby obtaining the new problem
\begin{align}
\label{eq: illustration 1 reavar}
\min_{x\in[0,1]}\reavar_\gamma(E_1(x),L_1).
\end{align}
In view of Proposition \ref{prop: parametric gamma avar}, we can equivalently write
\begin{align*}
\min_{x\in[0,1]}\reavar_\gamma(E_1(x),L_1) = \min_{x\in[0,1]}\max\{\avar_\alpha(E_1(x)),\avar_\alpha(E_1(x)+(1-r)L_1)\}.
\end{align*}
Using the properties of $\avar$ we obtain the more explicit problem
\begin{align*}
\min_{x\in[0,1]}\reavar_\gamma(E_1(x),L_1) = b\min_{x\in[0,1]}\max\{\avar_\alpha(R^2)x-1+\ell,\avar_\beta(R^2)x-1+r\ell\}.
\end{align*}
The optimal portfolio is thus determined by the sign of $\avar_\alpha(R^2)$ and $\avar_\beta(R^2)$. By design, we always have $\avar_\beta(R^2)>\avar_\alpha(R^2)$. Moreover, recall that $\avar_\alpha(R^2)<0$. A direct computation shows that
\[
\avar_\beta(R^2)=
\begin{cases}
\frac{4}{100} & \mbox{if} \ \beta\in(0,0.1\%],\\
\frac{45}{1000000}\frac{1}{\beta}-\frac{5}{1000} & \mbox{if} \ \beta\in(0.1\%,\alpha).
\end{cases}
\]
In particular, we have
\[
\avar_\beta(R^2)\le0 \ \iff \ \beta\ge\frac{45}{1000000}\frac{1000}{5}=0.9\%.
\]
\begin{figure}[t]
\centering
\subfigure{
\includegraphics[width=0.4\textwidth]{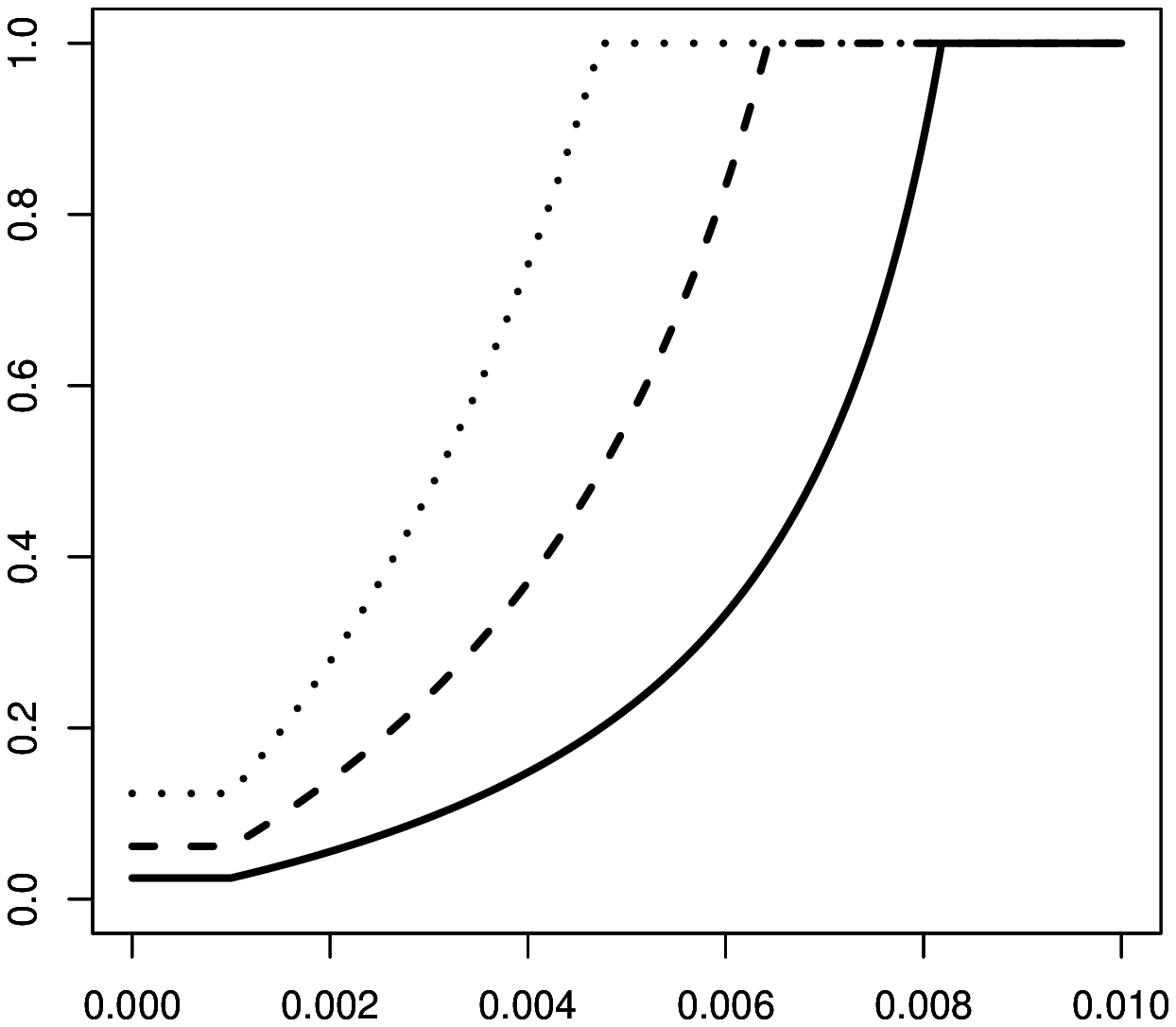}
}
\hspace{1cm}
\subfigure{
\includegraphics[width=0.4\textwidth]{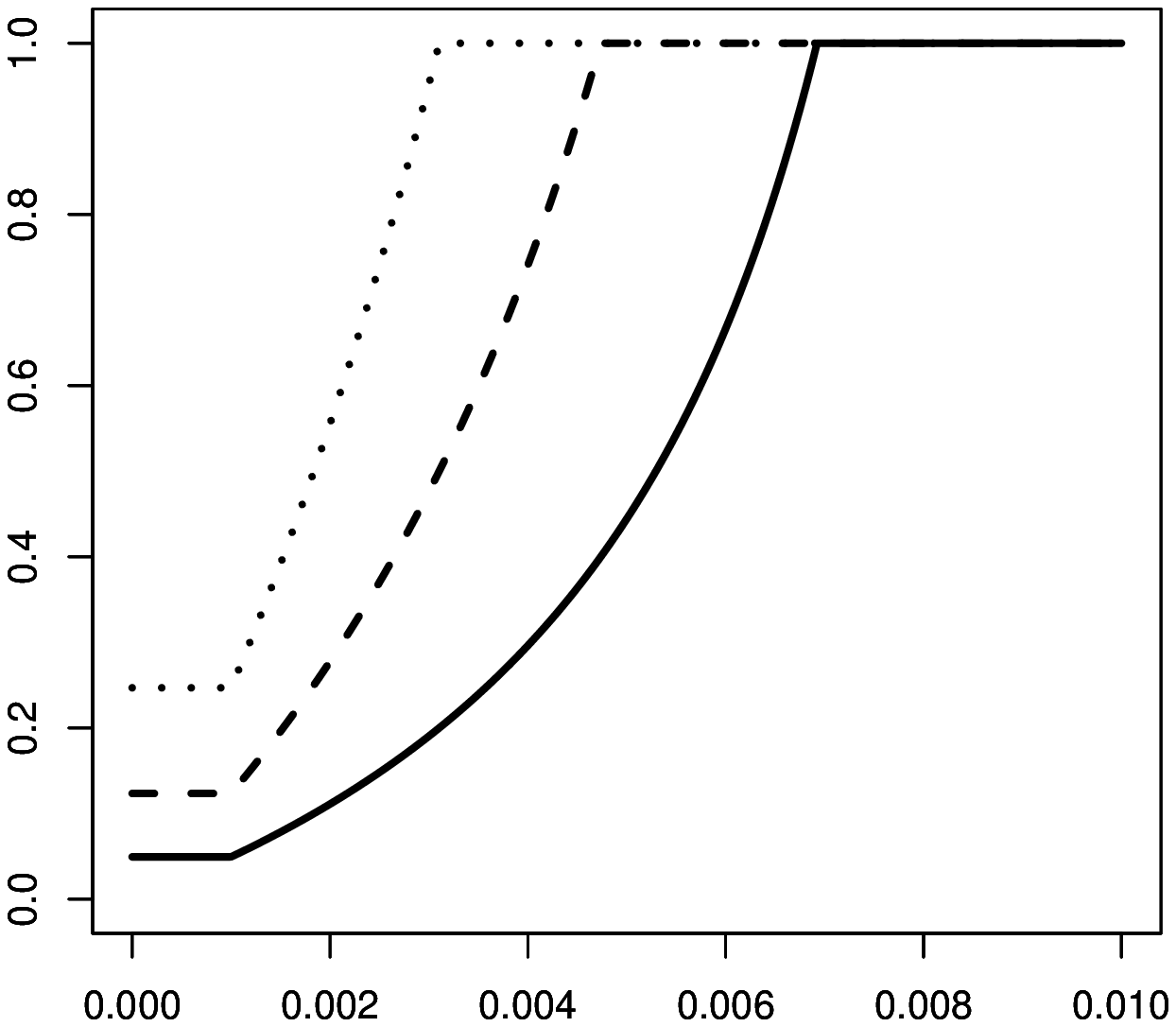}
}
\subfigure{
\hspace{0.15cm}\includegraphics[width=0.4\textwidth]{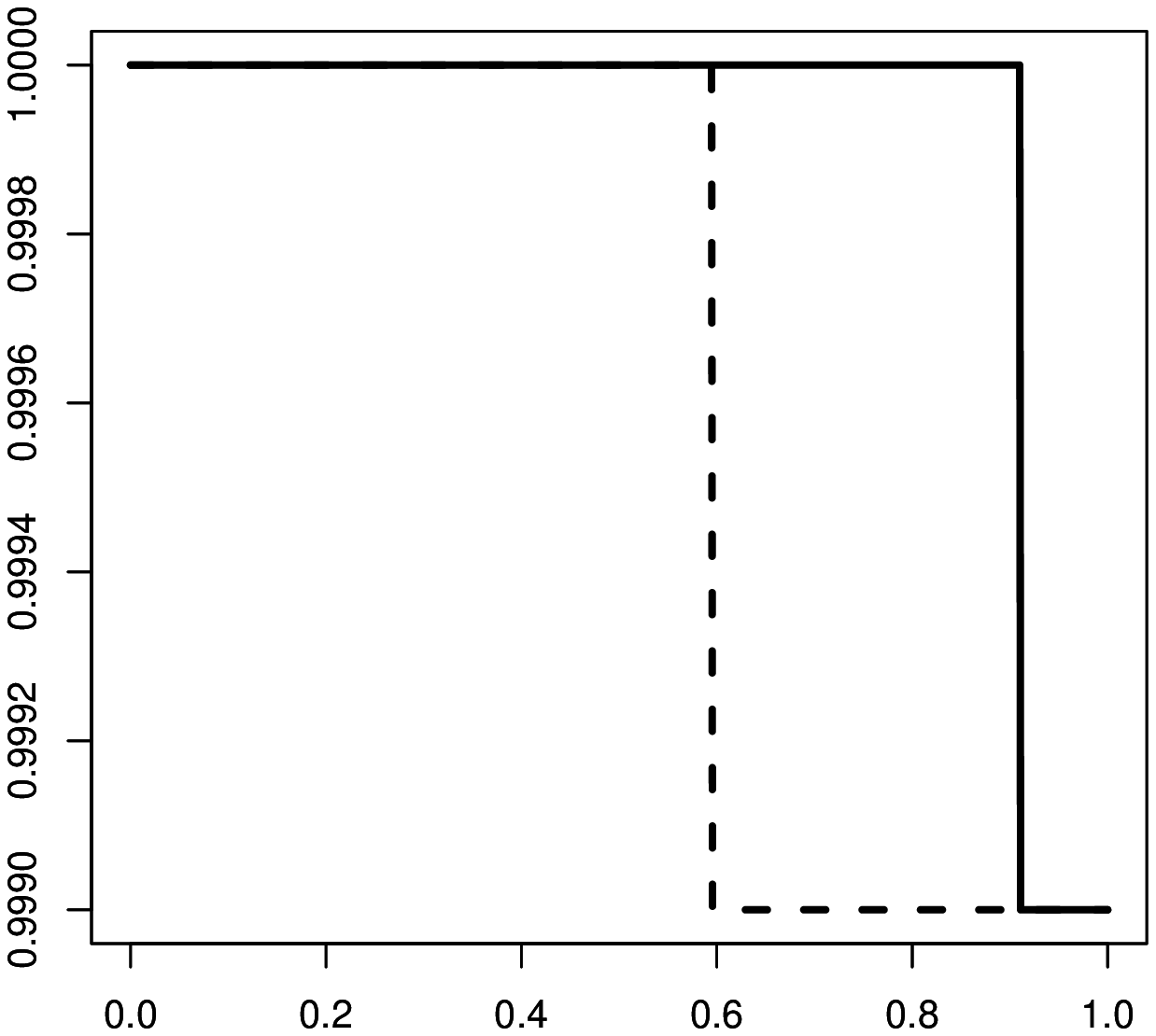}
}
\hspace{1cm}
\subfigure{
\includegraphics[width=0.4\textwidth]{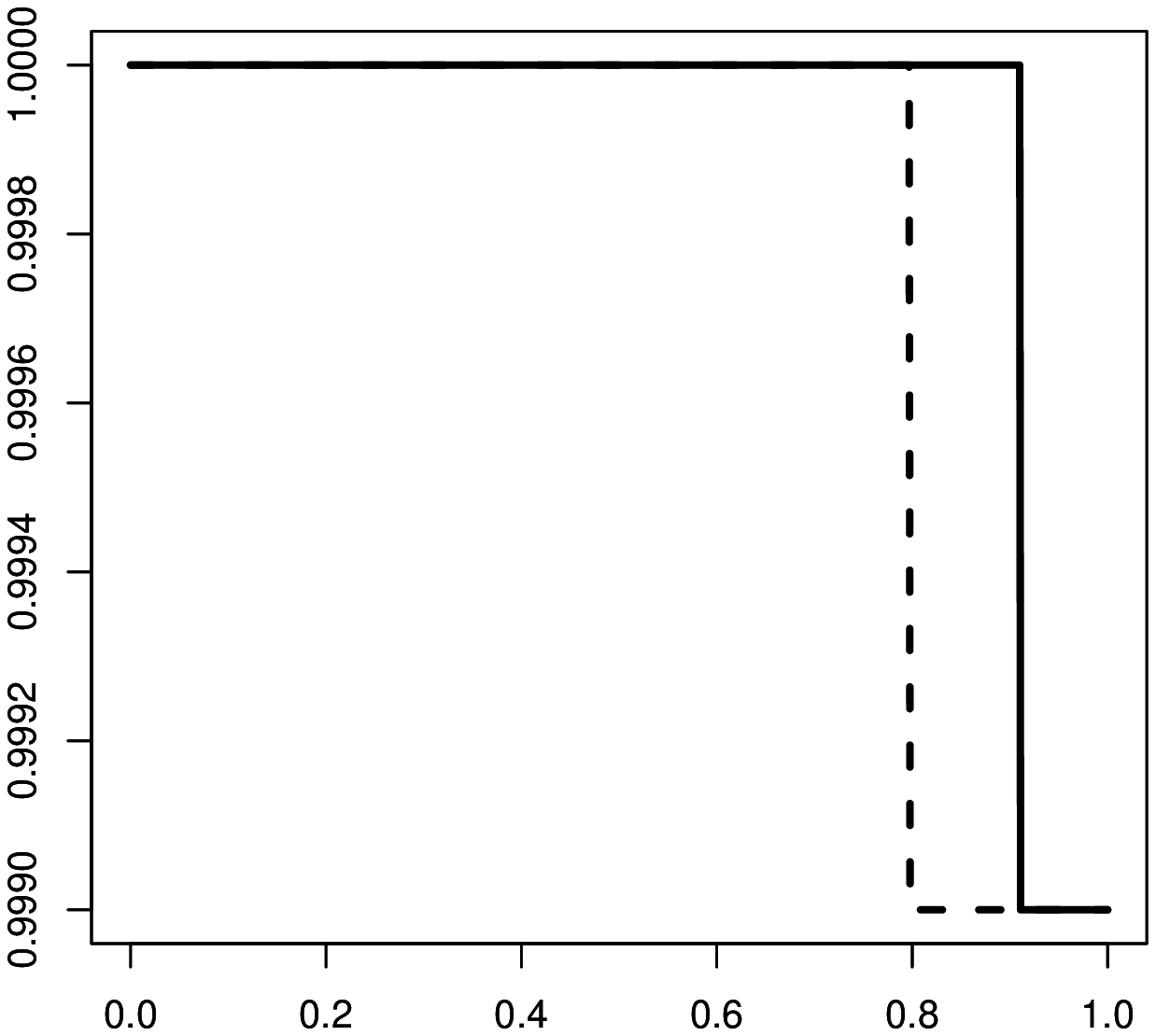}
}
\caption{Top: Optimal percentage of the initial budget invested in the risky asset as a function of $\beta$ for $\ell=10\%$ (left) and $\ell=20\%$ (right) and for $r=95\%$ (dotted), $r=97.5\%$ (dashed), $r=99\%$ (solid). Bottom: Recovery probability as a function of $\lambda$ for $\ell=10\%$ (left) and $\ell=20\%$ (right) and for $\beta=0.5\%$ and $r=99\%$ under $\avar$ (dashed) and $\reavar$ (solid).}
\label{fig: optimal portfolio}
\end{figure}
As a consequence, we obtain the following picture about optimal portfolios under $\reavar$ as a function of the parameters $\beta$ and $r$ that determine the level function $\gamma$. On the one hand, if $\beta\ge0.9\%$, then problem \eqref{eq: illustration 1 reavar} admits a unique optimal solution given by $x=1$. In this case, there is no difference between $\avar$ and $\reavar$ and the optimal portfolio in both cases corresponds to investing the whole budget in the risky asset. On the other hand, if $\beta<0.9\%$, then
problem \eqref{eq: illustration 1 reavar} admits the unique optimal solution
\[
x = \min\left\{\frac{(1-r)\ell}{\avar_\beta(R^2)-\avar_\alpha(R^2)},1\right\}.
\]
In particular, we observe that $x<1$ if and only if
\[
(1-r)\ell<\avar_\beta(R^2)-\avar_\alpha(R^2)=
\begin{cases}
0.0405 & \mbox{if} \ \beta\in(0,0.1\%],\\
0.000045\frac{1}{\beta}-
0.0045 & \mbox{if} \ \beta\in(0.1\%,0.9\%).
\end{cases}
\]
As a result, the optimal proportion of the budget invested in the risky asset depends on the relative size of the recovery parameters $\beta$ and $r$ as well as of the liability parameter $\ell$. Everything else remaining equal, the optimal portfolio weight for the risky asset is increasing in $\beta$ and $\ell$ while it is decreasing in $r$, as one would expect. In Figure \ref{fig: optimal portfolio} (top) we display the optimal percentage of the initial budget invested in the risky asset as a function of $\beta$ for different choices of $r$ and $\ell$. More precisely, we consider the situation where management targets a recovery of $95\%$, $97.5\%$, or $99\%$ of liabilities and where the size of liabilities amounts to $10\%$ or $20\%$ of the entire budget. In each of these situations there are realistic choices of $\beta$ under which, differently from the $\avar$ case, it is not optimal to fully invest in the risky asset. In fact, there are situations where a considerable size of the budget is optimally invested in the risk-free asset.

\smallskip
We complement the previous analysis by assessing the ability of $\avar$ and $\reavar$ to cover a pre-specified portion of liabilities when budget is invested optimally and capital is adjusted to respect regulatory requirements. To this effect, we denote the underlying probability by $\probp$ and compute for difference choices of $\lambda\in(0,1)$ the recovery probability
\[
\probp(\lambda) := \probp(b(1-x^\ast)(1+R^1)+bx^\ast(1+R^2)+\rho^\ast\ge\lambda L_1)
\]
where $x^\ast$ is the optimal percentage of the budget invested in the risky asset and
\[
\rho^\ast:=
\begin{cases}
\avar_\alpha(E_1(x^\ast)),\\
\reavar_\gamma(E_1(x^\ast),L_1).
\end{cases}
\]
We know that $x^\ast=1$ under $\avar$. In this case, one can easily show that for every $\lambda\in(0,1)$
\[
\probp(\lambda)=\probp(R^2+\avar_\alpha(R^2)+\ell\ge\lambda\ell)=
\begin{cases}
100\% & \mbox{if} \ \lambda\le1-\frac{0.0405}{\ell},\\
99.9\% & \mbox{if} \ \lambda>1-\frac{0.0405}{\ell}.
\end{cases}.
\]
If we work under $\reavar$, we obtain for every $\lambda\in(0,1)$
\[
\probp(\lambda)=\probp(x^\ast R^2+\ell+\max\{x^\ast\avar_\alpha(R^2),x^\ast\avar_\beta(R^2)-(1-r)\ell\}\ge\lambda\ell).
\]
In Figure \ref{fig: optimal portfolio} (bottom) we plot recovery probabilities under $\avar$ and $\reavar$ for recovery parameters $\beta=0.5\%$ and $r=99\%$. By definition of $R^2$, the recovery probability is at least $99.9\%$ in both cases. If $\ell=10\%$, then the optimal portfolio weight for the risky asset is $x^\ast\approx0.2$. In this case, $\avar$ guarantees full recovery up to $60\%$ of liabilities whereas $\reavar$ performs much better by ensuring full recovery up to $90\%$ of liabilities. The gap is narrower but still clear when $\ell=20\%$, in which case the optimal portfolio weight for the risky asset is $x^\ast\approx0.4$. In this case, $\avar$ guarantees full recovery up to $80\%$ of liabilities while $\reavar$ continues to ensure full recovery up to $90\%$ of liabilities.


\subsection{Case study 2: Optimal portfolio with dependence uncertainty}
\label{sect: efficient frontier}

We consider a financial institution with total budget $b > 0$ at the initial date. The management
can invest in two assets with one-period relative returns $R^1$ and $R^2$ with the following characteristics:
\begin{itemize}
    \item $R^1$ has a normal distribution with mean $0$ and standard deviation $1.5\%$.
    \item $R^2$ has a Student distribution with mean $0.5\%$, scale $1\%$, and $2$ degrees of freedom.
    \item $R^1$ and $R^2$ have a Student copula with linear correlation $0.2$ and $2$ degrees of freedom.
\end{itemize}
The second asset is clearly riskier as it has infinite variance. At the terminal date the institution is exposed to deterministic liabilities amounting to $L_1 = b\ell$ for given $\ell\in(0,1)$. We denote by $x$ the fraction of total budget that is invested in the second asset. The corresponding end-of-period net asset value is therefore equal to
\[
E_1(x) := b(1-x)(1+R^1)+bx(1+R^2)-L_1 = b(1+R^1+x(R^2-R^1)-\ell).
\]
We study robust portfolio optimization under box uncertainty using the notation introduced in Section~\ref{sect: box uncertainty}. As a first step, we apply Monte Carlo simulation to generate a sample of $m=50000$ realizations of the random vector $(R^1,R^2)$, which are denoted by $(R^1_j,R^2_j)$ for $j=1,\dots,m$, and fix a benchmark probability measure $\probp\in\cP$ under which $(R^1,R^2)$ is discrete and satisfies
\[
\probp((R^1,R^2)=(R^1_j,R^2_j))=\frac{1}{m}=0.002\%.
\]
For a given $C\in[0,\frac{1}{m}]$ we set $\underline{\bm{\varepsilon}}:=(-C,\dots,-C)$ and $\overline{\bm{\varepsilon}}:=(C,\dots,C)$ and consider the corresponding perturbation set $\cE$. In particular, observe that for every $\bm{\varepsilon}\in\cE$ we have
\[
\probp_{\bm{\varepsilon}}((R^1,R^2)=(R^1_j,R^2_j))\in\left[\frac{1}{m}-C,\frac{1}{m}+C\right], \ \ \ j=1,\dots,m.
\]
The degree of box uncertainty therefore increases with the parameter $C$. In particular, the case $C=0$ corresponds to no box uncertainty. As in the previous case study, set $\alpha=1\%$ and for $\beta\in(0,\alpha)$ and $r\in(0,1)$ consider the level function given by
\[
\gamma(\lambda):=
\begin{cases}
\alpha & \mbox{if} \ \lambda\in[r,1],\\
\beta & \mbox{if} \ \lambda\in[0,r).
\end{cases}
\]
For given $\mu>0$ define the set of admissible holdings in the second asset by
\[
\cX(\mu) := \bigg\{x\in[0,1] \,; \ \min_{\bm{\varepsilon}\in\cE}\{(1-x)\E_{\probp_{\bm{\varepsilon}}}(R^1)+x\E_{\probp_{\bm{\varepsilon}}}(R^2)\}\ge\mu\bigg\}.
\]
We focus on the robust optimization problem
\begin{align}
\label{eq: case study 2}
\min_{x\in\cX(\mu)}\max_{\bm{\varepsilon}\in\cE}\reavar^{\probp_\varepsilon}_\gamma(E_1(x),L_1).
\end{align}
Filtering out the constant $b$ and using the explicit shape of $\gamma$ we equivalently obtain
\begin{align*}
\min_{x\in\cX(\mu)}\max_{\bm{\varepsilon}\in\cE}\max\{\avar^{\probp_{\bm{\varepsilon}}}_\alpha((1-x)R^1+xR^2-\ell),\avar^{\probp_{\bm{\varepsilon}}}_\beta((1-x)R^1+xR^2-r\ell)\}.
\end{align*}
To determine the efficient frontier we rely on the minimax identity established in Theorem~\ref{thm:minimax robust box} and adapt the approach of \ci{ZhuFu09}, which was applied to mean-risk problems under $\avar$. First, it follows immediately from Theorem~\ref{thm:minimax robust box}
that problem \eqref{eq: case study 2} is equivalent to
\begin{align*}
\min \ T & \\
s.t. & \ \ \
\frac{1}{\beta}\left(\frac{1}{m}\sum_{j=1}^mu^1_j+\max_{\bm{\varepsilon}\in\cE}\sum_{j=1}^m\varepsilon_ju^1_j\right)-v^1\le T,\\
& \ \ \ \frac{1}{\alpha}\left(\frac{1}{m}\sum_{j=1}^mu^2_j+\max_{\bm{\varepsilon}\in\cE}\sum_{j=1}^m\varepsilon_ju^2_j\right)-v^2\le T,\\
& \ \ \ u^1_j\ge0, \ u^1_j\ge v^1-(1-x)R^1_j-xR^2_j+r\ell, \ \ \ j=1,\dots,m,\\
& \ \ \ u^2_j\ge0, \ u^2_j\ge v^2-(1-x)R^1_j-xR^2_j+\ell, \ \ \ j=1,\dots,m,\\
& \ \ \ \frac{1}{m}\sum_{j=1}^m((1-x)R^1_j+xR^2_j)+\min_{\bm{\varepsilon}\in\cE}\left\{\sum_{j=1}^m((1-x)\varepsilon_jR^1_j+x\varepsilon_jR^2_j)\right\}\ge\mu,\\
& \ \ \ T,v^1,v^2\in\R, \ x\in[0,1].
\end{align*}
We now follow \ci{ZhuFu09}, to which we refer for the necessary details, and use the explicit form of $\cE$ (remember that $\underline{\bm{\varepsilon}}=(-C,\dots,-C)$ and $\overline{\bm{\varepsilon}}=(C,\dots,C)$ for some $C\in[0,\frac{1}{m}]$) to rewrite the inner optimizations over $\cE$ as convenient linear programs. With respect to the maximization over $\cE$, it suffices to observe that, by duality, for any given $u\in\R^m$ we have
\[
\max_{\bm{\varepsilon}\in\cE}\sum_{j=1}^m\varepsilon_ju_j = \min_{(z,\bm{\sigma},\bm{\tau})\in\R\times\R^m_+\times\R^m_-}\left\{C\sum_{j=1}^m(\sigma_j-\tau_j) \,; \ z+\sigma_j+\tau_j=u_j, \ j=1,\dots,m\right\}.
\]
Similarly, with respect to the minimization over $\cE$, for fixed $x\in[0,1]$ we obtain by duality that
\[
\min_{\bm{\varepsilon}\in\cE}\left\{\sum_{j=1}^m((1-x)\varepsilon_jR^1_j+x\varepsilon_jR^2_j)\right\} = \max_{(\bm{\zeta},\bm{\eta})\in\cS(x)}\left\{C\sum_{j=1}^m(\zeta_j-\eta_j)\right\},
\]
where the dual domain is defined by
\[
\cS(x) := \{(\bm{\zeta},\bm{\eta})\in\R^m_-\times\R^m_+ \,; \ \exists w\in\R, \ w+\zeta_j+\eta_j=(1-x)R^1_j+xR^2_j, \ j=1,\dots,m\}.
\]
In addition, we have $\cX(\mu)=\cX^\ast(\mu)$ where
\[
\cX^\ast(\mu) := \left\{x\in[0,1] \,; \ \exists (\bm{\zeta},\bm{\eta})\in\cS(x), \ \frac{1}{m}\sum_{j=1}^m((1-x)R^1_j+xR^2_j)+C\sum_{j=1}^m(\zeta_j-\eta_j)\ge\mu\right\}.
\]
\begin{figure}[t]
\centering
\subfigure{
\includegraphics[width=0.4\textwidth]{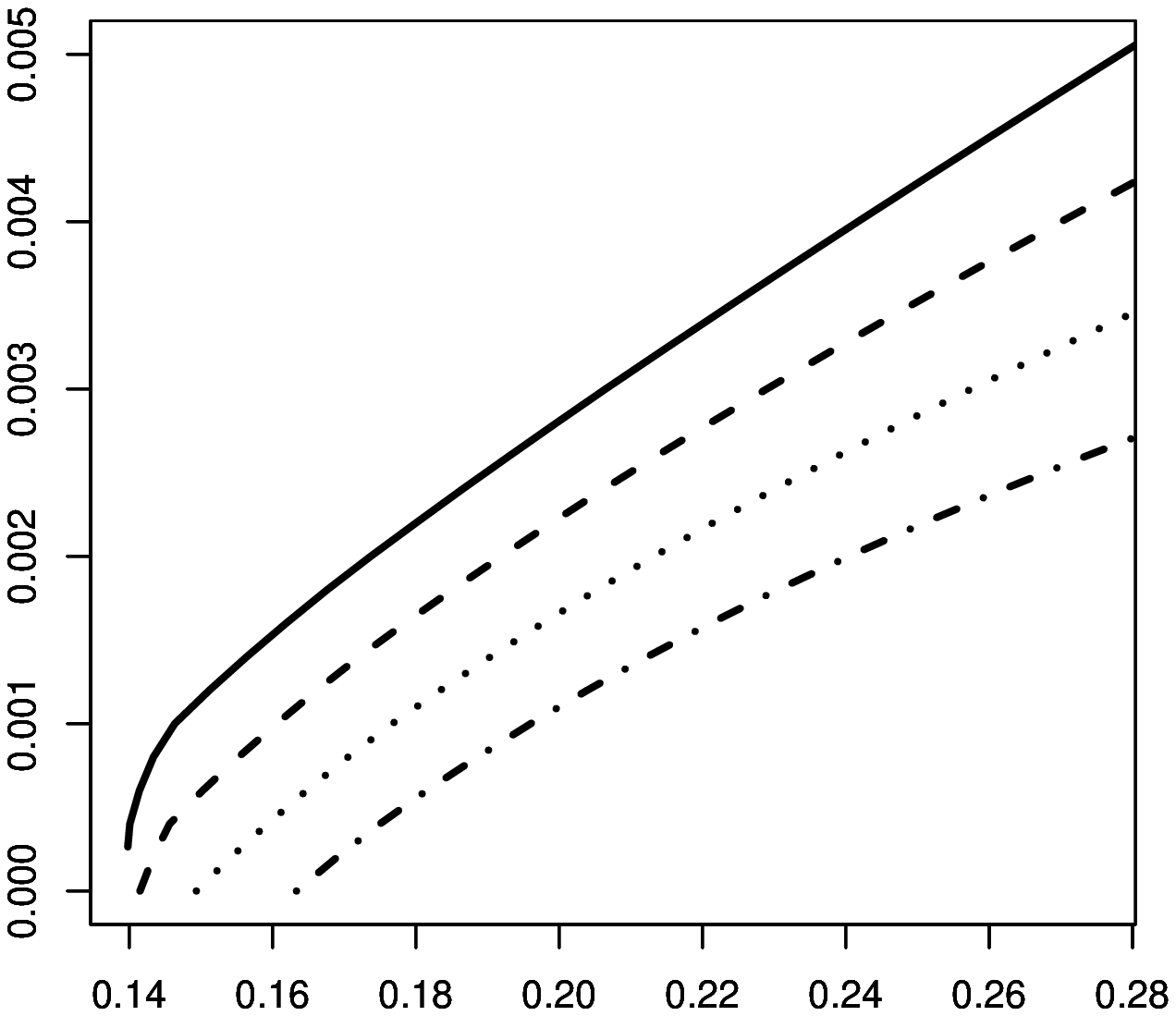}
}
\hspace{1cm}
\subfigure{
\includegraphics[width=0.4\textwidth]{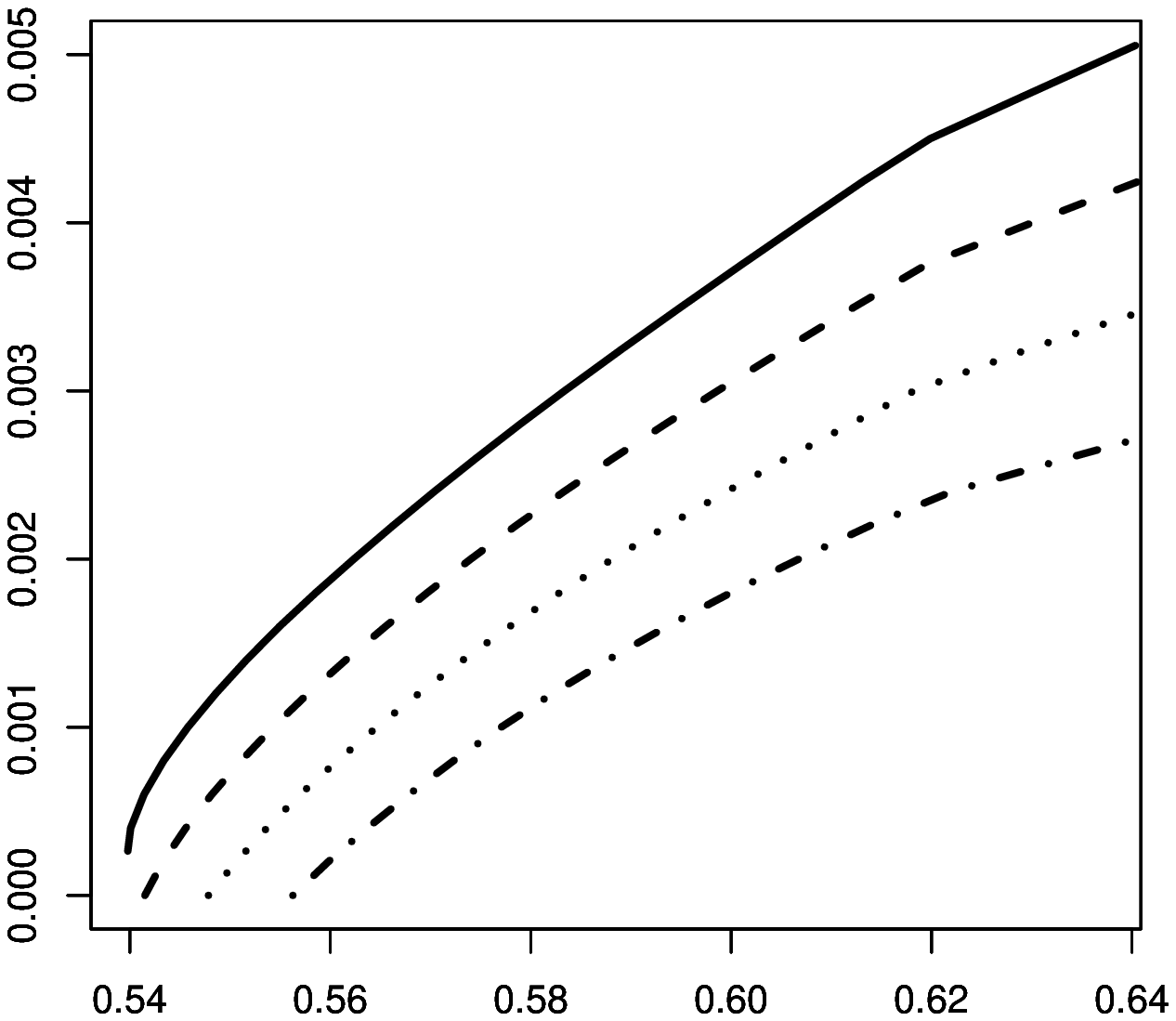}
}
\caption{Efficient frontier with $\reavar$ with $\beta=0.5\%$ and $r=90\%$ on the x-axis and expected relative returns on the y-axis for $\ell=10\%$ (left) and $\ell=50\%$ (right) and for $C=0$ (solid), $C=0.0001\%$ (dashed), $C=0.0002\%$ (dotted), $C=0.0003\%$ (dotdashed).}
\label{fig: efficient frontier}
\end{figure}
As a consequence, problem \eqref{eq: case study 2} is equivalent to the tractable linear program
\begin{align*}
\min \ T & \\
s.t. & \ \ \ \frac{1}{\beta m}\sum_{j=1}^mu^1_j+\frac{C}{\beta}\sum_{j=1}^m(\sigma^1_j-\tau^1_j)-v^1\le T,\\
& \ \ \ \frac{1}{\alpha m}\sum_{j=1}^mu^2_j+\frac{C}{\alpha}\sum_{j=1}^m(\sigma^2_j-\tau^2_j)-v^2\le T,\\
& \ \ \ u^1_j\ge v^1-(1-x)R^1_j-xR^2_j+r\ell, \ \ \ j=1,\dots,m,\\
& \ \ \ u^2_j\ge v^2-(1-x)R^1_j-xR^2_j+\ell, \ \ \ j=1,\dots,m,\\
& \ \ \ u^i_j\ge0, \ u^i_j=z^i+\sigma_j+\tau_j, \ \ \ i=1,2, \ j=1,\dots,m,\\
& \ \ \ z^i\in\R, \ \sigma^i_j\ge0, \ \tau^i_j\le0, \ \ \ i=1,2, \ j=1,\dots,m,\\
& \ \ \ \frac{1}{m}\sum_{j=1}^m((1-x)R^1_j+xR^2_j)+C\sum_{j=1}^m(\zeta_j-\eta_j)\ge\mu,\\
& \ \ \ w+\zeta_j+\eta_j=(1-x)R^1_j+xR^2_j, \ \ \ j=1,\dots,m,\\
& \ \ \ w\in\R, \ \zeta_j\le0, \ \eta_j\ge0, \ \ \ j=1,\dots,m,\\
& \ \ \ T,v^1,v^2\in\R, \ x\in[0,1].
\end{align*}
A standard dual simplex method can be employed to find the optimal value $\rho^\ast(\mu)$ of this linear program as a function of a target expected relative return $\mu$. In Figure~\ref{fig: efficient frontier} we plot the corresponding efficient frontier, i.e., the set of points $(\rho^\ast(\mu),\mu)$, for a given range of expected relative returns. We focus on the two situations where liabilities amount to $10\%$, respectively $50\%$, of the initial budget. As is intuitive, in the latter case the same level of expected relative returns is achieved at the cost of higher risk. A direct inspection of the plots reveal that, for the chosen range of target returns, risk is $2$ to $4$ times higher when liabilities have a larger size. However, the qualitative impact of dependence uncertainty, in the form of box uncertainty, is the same in both situations. If the degree of uncertainty increases, the same level of expected relative returns is achieved at the cost of higher risk. Interestingly enough, the impact of dependence uncertainty on risk is more pronounced when the size of liabilities is smaller.


\appendix

\section{A minimax theorem}

In this appendix we record a special minimax theorem that we apply repeatedly in the paper. For $n\in\N$ we denote by $\Delta ^n$ the $n$-dimensional simplex, i.e., the set of vectors in $\R^n$ with nonnegative components summing up to $1$.

\begin{theorem}
\label{theo: minimax}
Let $f_1,\dots,f_n:\R\to\R$ be convex functions attaining their minimum on (nonempty) compact intervals $\cI_1,\dots,\cI_n\subset\R$. Then,
\begin{equation}
\label{eq: minimax}
\max_{i=1,\dots,n}\min_{x\in\R}f_i(x) = \min_{\bm{x}\in\R^n}\max_{i=1,\dots,n}f_i(x_i).
\end{equation}
\end{theorem}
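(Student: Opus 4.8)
The plan is to establish the asserted identity by proving the two inequalities separately, exploiting the crucial feature that on the right-hand side each function $f_i$ is evaluated at its \emph{own} coordinate $x_i$, so that the inner minimization decouples completely across the index $i$.

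First I would prove ``$\leq$''. Fix an arbitrary $\bm{x}=(x_1,\dots,x_n)\in\R^n$. For every index $j$ we have $\max_{i=1,\dots,n}f_i(x_i)\geq f_j(x_j)\geq\min_{x\in\R}f_j(x)$, and since this holds for all $j$ it follows that $\max_{i=1,\dots,n}f_i(x_i)\geq\max_{j=1,\dots,n}\min_{x\in\R}f_j(x)$. As the quantity on the right does not depend on $\bm{x}$, taking the infimum over $\bm{x}\in\R^n$ yields the inequality ``$\leq$'' in \eqref{eq: minimax}. This is the routine (weak-duality) direction and requires nothing beyond the definitions.

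Next I would prove ``$\geq$'', which is where the hypotheses enter. For each $i$ the function $f_i$ attains its minimum on the nonempty compact interval $\cI_i$, so I may select a minimizer $x_i^\ast\in\cI_i$, that is $f_i(x_i^\ast)=\min_{x\in\R}f_i(x)$. Setting $\bm{x}^\ast:=(x_1^\ast,\dots,x_n^\ast)$ gives $\max_{i=1,\dots,n}f_i(x_i^\ast)=\max_{i=1,\dots,n}\min_{x\in\R}f_i(x)$, which shows that the right-hand minimum is attained at $\bm{x}^\ast$ and is bounded above by the left-hand side. Combining the two estimates delivers the desired equality.

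The point I would stress is that there is no genuine minimax obstacle here: the identity holds precisely because the $n$ minimizations on the right are carried out over \emph{independent} variables, which lets one minimize each $f_i$ separately and then take the maximum. By contrast, had a single shared variable been used, i.e. $\min_{x\in\R}\max_i f_i(x)$, the identity would fail in general. The only care needed is that each individual minimum is actually attained, which is exactly what the convexity-plus-compact-interval hypothesis guarantees; this is what justifies writing $\min$ rather than $\inf$ and what permits the explicit choice of the minimizers $x_i^\ast$ in the second step.
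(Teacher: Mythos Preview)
Your proof is correct and in fact more elementary than the paper's. The paper linearizes the outer maximum over $i$ via the simplex $\Delta^n$, writes $\max_i\min_x f_i(x)=\max_{\boldsymbol{\theta}\in\Delta^n}\min_{\bm{x}\in\cI}\sum_i\theta_i f_i(x_i)$ on the compact box $\cI=\cI_1\times\cdots\times\cI_n$, and then invokes Fan's minimax theorem to swap the order. Your argument bypasses this machinery entirely by exploiting exactly the decoupling you identify: since the right-hand minimum is over independent coordinates, choosing each $x_i^\ast$ to be a minimizer of $f_i$ immediately realizes the left-hand side, and the other inequality is the trivial weak-duality bound. Your route is shorter and, as you correctly note, uses the hypotheses only to guarantee that each $\min_x f_i(x)$ is attained; convexity and the compact intervals $\cI_i$ play no further structural role in your argument, whereas the paper needs them to meet the assumptions of Fan's theorem. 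The paper's approach has the minor pedagogical advantage of displaying the same minimax template it uses elsewhere, but for this particular statement your direct proof is preferable.
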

\begin{proof}
Set $\cI=\cI_1\times\cdots\times\cI_n$. For every $\boldsymbol{x}\in \R^n$ we can write
\begin{align*}
\max_{i=1,\dots,n} \min_{x\in \R} f_i(x) = \max_{\boldsymbol{\theta} \in \Delta ^n} \sum_{i=1}^{n} \theta_i  \min_{x\in \R} f_i(x) = \max_{\boldsymbol{\theta} \in \Delta ^n}\min_{\boldsymbol{x}\in \R^n} \sum_{i=1}^{n} \theta_i f_i(x_i) = \max_{\boldsymbol{\theta} \in \Delta ^{n}}\min_{\boldsymbol{x}\in\cI} \sum_{i=1}^{n} \theta_i f_i(x_i).
\end{align*}
As $\Delta ^n$ and $\cI$ are compact and convex and the function $(\boldsymbol{\theta},\boldsymbol{x})\mapsto\sum_{i=1}^{n} \theta _i f_i(x_i)$ is linear in $\boldsymbol{\theta}$ and convex in $\boldsymbol{x}$, we can apply the classical minimax theorem in \ci{fan1953minimax} to infer that
\begin{align*}
\max_{i=1,\dots,n} \min_{x\in \R} f_i(x) = \min_{\boldsymbol{x}\in\cI}\max_{\boldsymbol{\theta} \in \Delta ^{n}}\sum_{i=1}^{n} \theta_i f_i(x_i) = \min_{\boldsymbol{x}\in\cI}\max_{i=1,\dots,n}f_i(x_i) = \min_{\boldsymbol{x}\in\R^n}\max_{i=1,\dots,n}f_i(x_i).
\end{align*}
This delivers the desired minimax equality.
\end{proof}

\smallskip

\begin{remark}
The outer minimum in the minimax equality \eqref{eq: minimax} cannot be taken over $\R$ in general. To see this, consider the convex functions $f_1,f_2:\R\to\R$ defined by
\[
f_1(x):=
\begin{cases}
-x-2 & \mbox{if} \ x\le-2,\\
0 & \mbox{if} \ -2<x\le-1,\\
x+1 & \mbox{if} \ x>-1,
\end{cases}
 \ \ \ \
f_2(x):=
\begin{cases}
-x+1 & \mbox{if} \ x\le1,\\
0 & \mbox{if} \ 1<x\le2,\\
x-2 & \mbox{if} \ x>2.
\end{cases}
\]
Note that $f_1$ and $f_2$ attain their minimum on $[-2,-1]$ and $[1,2]$, respectively. However,
\[
\max_{i=1,2}\min_{x\in\R}f_i(x) = 0 < 1 = \min_{x\in\R}\max_{i=1,2}f_i(x).
\]
\end{remark}

\bibliography{bibtex}

\begin{thebibliography}{xx}

\harvarditem[Artzner et~al.]{Artzner, Delbaen, Eber \harvardand\
  Heath}{1999}{ADEH99}
Artzner, Philippe, Freddy Delbaen, Jean-Marc Eber \harvardand\ David Heath
  \harvardyearleft 1999\harvardyearright , `Coherent measures of risk', {\em
  Mathematical Finance} {\bf 9}(3),~203--228.

\harvarditem{Basak \harvardand\ Shapiro}{2001}{basak2001value}
Basak, Suleyman \harvardand\ Alexander Shapiro  \harvardyearleft
  2001\harvardyearright , `Value-at-risk-based risk management: optimal
  policies and asset prices', {\em The review of financial studies} {\bf
  14}(2),~371--405.

\harvarditem[Campbell et~al.]{Campbell, Huisman \harvardand\
  Koedijk}{2001}{campbell2001optimal}
Campbell, Rachel, Ronald Huisman \harvardand\ Kees Koedijk  \harvardyearleft
  2001\harvardyearright , `Optimal portfolio selection in a value-at-risk
  framework', {\em Journal of Banking \& Finance} {\bf 25}(9),~1789--1804.

\harvarditem[Ciliberti et~al.]{Ciliberti, Kondor \harvardand\
  M{\'e}zard}{2007}{ciliberti2007feasibility}
Ciliberti, Stefano, Imre Kondor \harvardand\ Marc M{\'e}zard  \harvardyearleft
  2007\harvardyearright , `On the feasibility of portfolio optimization under
  expected shortfall', {\em Quantitative Finance} {\bf 7}(4),~389--396.

\harvarditem[Cuoco et~al.]{Cuoco, He \harvardand\
  Isaenko}{2008}{doi:10.1287/opre.1070.0433}
Cuoco, Domenico, Hua He \harvardand\ Sergei Isaenko  \harvardyearleft
  2008\harvardyearright , `Optimal dynamic trading strategies with risk
  limits', {\em Operations Research} {\bf 56}(2),~358--368.

\harvarditem{Fan}{1953}{fan1953minimax}
Fan, Ky  \harvardyearleft 1953\harvardyearright , `Minimax theorems', {\em
  Proceedings of the National Academy of Sciences of the United States of
  America} {\bf 39}(1),~42.

\harvarditem{Frey \harvardand\ McNeil}{2002}{frey2002var}
Frey, R{\"u}diger \harvardand\ Alexander~J McNeil  \harvardyearleft
  2002\harvardyearright , `Var and expected shortfall in portfolios of
  dependent credit risks: conceptual and practical insights', {\em Journal of
  banking \& finance} {\bf 26}(7),~1317--1334.

\harvarditem{Gundel \harvardand\ Weber}{2007}{GUNDEL20071663}
Gundel, Anne \harvardand\ Stefan Weber  \harvardyearleft 2007\harvardyearright
  , `Robust utility maximization with limited downside risk in incomplete
  markets', {\em Stochastic Processes and their Applications} {\bf
  117}(11),~1663--1688.

\harvarditem{Gundel \harvardand\ Weber}{2008}{GUNDEL20081126}
Gundel, Anne \harvardand\ Stefan Weber  \harvardyearleft 2008\harvardyearright
  , `Utility maximization under a shortfall risk constraint', {\em Journal of
  Mathematical Economics} {\bf 44}(11),~1126--1151.

\harvarditem[Hamm et~al.]{Hamm, Knispel \harvardand\ Weber}{2020}{HKW20}
Hamm, Anna-Maria, Thomas Knispel \harvardand\ Stefan Weber  \harvardyearleft
  2020\harvardyearright , `Optimal risk sharing in insurance networks', {\em
  European Actuarial Journal} {\bf 10}(1),~203--234.

\harvarditem[Leippold et~al.]{Leippold, Trojani \harvardand\
  Vanini}{2006}{leippold2006equilibrium}
Leippold, Markus, Fabio Trojani \harvardand\ Paolo Vanini  \harvardyearleft
  2006\harvardyearright , `Equilibrium impact of value-at-risk regulation',
  {\em Journal of Economic Dynamics and Control} {\bf 30}(8),~1277--1313.

\harvarditem{Lintner}{1965}{budgets1965valuation}
Lintner, John  \harvardyearleft 1965\harvardyearright , `The valuation of risk
  assets and the selection of risky investments in stock portfolios and capital
  budgets', {\em The Review of Economics and Statistics} {\bf 47}(1),~13--37.

\harvarditem{Markowitz}{1952}{Markowitz52}
Markowitz, Harry~Max  \harvardyearleft 1952\harvardyearright , `Portfolio
  selection', {\em The Journal of Finance} {\bf 7}(1),~77--91.

\harvarditem[Munari et~al.]{Munari, Weber \harvardand\ Wilhelmy}{2023}{MWW23}
Munari, Cosimo, Stefan Weber \harvardand\ Lutz Wilhelmy  \harvardyearleft
  2023\harvardyearright , `Capital requirements and claims recovery: A new
  perspective on solvency regulation', {\em Journal of Risk and Insurance (to
  appear)} .

\harvarditem{Pirvu \harvardand\ Zitkovic}{2009}{pirvu09}
Pirvu, Traian~A. \harvardand\ Gordan Zitkovic  \harvardyearleft
  2009\harvardyearright , `Maximizing the growth rate under risk constraints',
  {\em Mathematical Finance} {\bf 19}(3),~423--455.

\harvarditem{Quaranta \harvardand\ Zaffaroni}{2008}{quaranta2008robust}
Quaranta, Anna~Grazia \harvardand\ Alberto Zaffaroni  \harvardyearleft
  2008\harvardyearright , `Robust optimization of conditional value at risk and
  portfolio selection', {\em Journal of Banking \& Finance} {\bf
  32}(10),~2046--2056.

\harvarditem{Rockafellar \harvardand\ Uryasev}{2000}{RU00}
Rockafellar, R.~Tyrrell \harvardand\ Stanislav Uryasev  \harvardyearleft
  2000\harvardyearright , `Optimization of conditional value-at-risk', {\em
  Journal of Risk} {\bf 2}(3),~21--41.

\harvarditem{Rockafellar \harvardand\ Uryasev}{2002}{RU02}
Rockafellar, R.~Tyrrell \harvardand\ Stanislav Uryasev  \harvardyearleft
  2002\harvardyearright , `Conditional value-at-risk for general loss
  distributions', {\em Journal of Banking $\&$ Finance} {\bf 26},~1443--1471.

\harvarditem{Sharpe}{1963}{sharpe1963simplified}
Sharpe, William~F  \harvardyearleft 1963\harvardyearright , `A simplified model
  for portfolio analysis', {\em Management science} {\bf 9}(2),~277--293.

\harvarditem{Yamai \harvardand\ Yoshiba}{2005}{yamai2005value}
Yamai, Yasuhiro \harvardand\ Toshinao Yoshiba  \harvardyearleft
  2005\harvardyearright , `Value-at-risk versus expected shortfall: A practical
  perspective', {\em Journal of Banking \& Finance} {\bf 29}(4),~997--1015.

\harvarditem{Yiu}{2004}{yiu2004optimal}
Yiu, Ka-Fai~Cedric  \harvardyearleft 2004\harvardyearright , `Optimal
  portfolios under a value-at-risk constraint', {\em Journal of Economic
  Dynamics and Control} {\bf 28}(7),~1317--1334.

\harvarditem{Zhu \harvardand\ Fukushima}{2009}{ZhuFu09}
Zhu, Shushang \harvardand\ Masao Fukushima  \harvardyearleft
  2009\harvardyearright , `Worst-case conditional value-at-risk with
  application to robust portfolio management', {\em Operations Research} {\bf
  57}(5),~1155--1168.

\end{thebibliography}


\begin{thebibliography}{100}
\singlespacing

\bibitem{rockafellar2000optimization} Rockafellar, R.T., Uryasev, S.: Optimization of conditional value-at-risk, {\em Journal of Risk}, 2(3), 21-41, 2000.

\bibitem{zhu2009worst} Zhu, S., Fukushima, M.: Worst-case Conditional Value-at-Risk with application to robust portfolio management, {\em Operations Research}, 57(5), 1155-1168, 2009.

\end{thebibliography}
\bibliographystyle{jmr}

\end{document}